\documentclass[11pt]{article}
\usepackage{longtable,geometry}
\usepackage[latin1]{inputenc}
\usepackage{dsfont}
\usepackage{latexsym}
\usepackage{amsmath}
\usepackage{amsthm}
\usepackage{amssymb}

\usepackage{epsf}
\usepackage{epsfig}

\usepackage{enumerate}

\usepackage{multirow}
\usepackage{slashbox}


\newcommand{\Acurs}{\begin{cal}A\end{cal}}
\newcommand{\Ccurs}{\begin{cal}C\end{cal}}
\newcommand{\Ecurs}{\begin{cal}E\end{cal}}
\newcommand{\Fcurs}{\begin{cal}F\end{cal}}
\newcommand{\Hcurs}{\begin{cal}H\end{cal}}
\newcommand{\Lcurs}{\begin{cal}L\end{cal}}
\newcommand{\Mcurs}{\begin{cal}M\end{cal}}
\newcommand{\Pcurs}{\begin{cal}P\end{cal}}
\newcommand{\Ocurs}{\begin{cal}O\end{cal}}
\newcommand{\Scurs}{\begin{cal}S\end{cal}}
\newcommand{\Ucurs}{\begin{cal}U\end{cal}}
\newcommand{\Vcurs}{\begin{cal}V\end{cal}}

\newcommand{\Esp}{\mathbb{E}}
\newcommand{\F}{\mathbb{F}}
\newcommand{\LL}{{\mathrm{L}}}
\newcommand{\R}{\mathbb{R}}
\newcommand{\N}{\mathbb{N}}
\newcommand{\Proba}{\mathbb{P}}
\newcommand{\Ind}{\mathds{1}}
\newcommand{\card}{\mathrm{card}}
\newcommand{\Tr}{\mathrm{Tr}}

\newcommand{\esssup}{\mathop{\mathrm{ess\,sup}}}

\newcommand{\hypo}{{\mathrm{H}}}


\theoremstyle{plain}
\newtheorem{proposition}{Proposition}
\newtheorem{theorem}{Theorem}

\theoremstyle{definition}
\newtheorem{remark}{Remark}
\newtheorem{example}{Example}


\begin{document}

\title{Swing Options Valuation:\\a BSDE with Constrained Jumps Approach}

\author{Marie Bernhart\footnote{Laboratoire de Probabilit\'es et Mod\`eles Al\'eatoires, Universit\'es Paris 6-Paris 7, CNRS UMR 7599
		 and EDF R$\&$D, 92141 Clamart, France. Email: \texttt{marie-externe.bernhart@edf.fr}}
	\and Huyên Pham \footnote{Laboratoire de Probabilit\'es et Mod\`eles Al\'eatoires, Universit\'es Paris 6-Paris 7, CNRS UMR 7599, France. Email: \texttt{pham@math.jussieu.fr}}
	\and Peter Tankov\footnote{Centre de Math\'ematiques Appliqu\'ees, Ecole Polytechnique, 91128 Palaiseau, France. Email: \texttt{peter.tankov@polytechnique.org}}
  \and Xavier Warin\footnote{EDF R$\&$D, 92141 Clamart, France and Laboratoire de Finance des March\'es de l'Energie, Universit\'e Paris Dauphine. Email: \texttt{xavier.warin@edf.fr}}
  }

\maketitle

%
\abstract{
We introduce a new probabilistic method for solving a class of impulse control problems
based on their representations as Backward Stochastic Differential Equations (BSDEs for short) with constrained jumps.
As an example, our method is used for pricing Swing options. 
We deal with the jump constraint by a penalization procedure and apply a discrete-time backward scheme to the resulting penalized BSDE with jumps. 
We study the convergence of this numerical method, with respect to the main approximation parameters: the jump intensity $\lambda$, the penalization parameter $p > 0$ and the time step. In particular, we obtain a convergence rate of the error due to penalization of order $(\lambda p)^{\alpha - \frac{1}{2}}, \forall \alpha \in \left(0, \frac{1}{2}\right)$.
Combining this approach with Monte Carlo techniques, we then work out the valuation problem
of (normalized) Swing options in the Black and Scholes framework. We present numerical tests and compare our results with a classical iteration method. \\
}

\noindent
\textbf{Keywords} Backward stochastic differential equations with constrained jumps, Impulse control problems, Swing options, Monte Carlo methods


\newpage
%
\section{Introduction}
\label{sec:introduction}

In this report, we introduce a new probabilistic method for solving impulse control problems
based on their representations as Backward Stochastic Differential Equations (BSDEs for short) with constrained jumps.
As an example, our method is used for pricing Swing options in the Black and Scholes framework. \\

BSDEs provide alternative characterizations of the solution to multiple-obstacle,
optimal switching (see among others \cite{HJ07, CL07b, HT07, Por08, DH10}) and more generally impulse control problems:
Kharroubi et al. \cite{KM10} recently introduced a family of BSDEs with constrained jumps providing a representation
of the solution to such problems.
A challenging question is that of the numerical approximation of this kind of BSDEs with constrained jumps.

A discrete-time backward scheme for solving BSDEs with jumps (without constraint) has been introduced by Bouchard and Elie \cite{BouE08}.
In our case, the main difficulty comes from the constraint, which concerns the jump component of the solution.
These BSDEs do not a priori involve any Skorohod type minimality condition. In consequence,
classical approaches by projected schemes (discretely reflected backward schemes) used for example by \cite{BouC07} and \cite{CE10} cannot be used. \\

We consider a penalization procedure to deal with the constraint on jumps and provide a convergence rate of the penalized solution to the exact solution.
This allows us to establish a convergence rate of the error 
between the solution of the considered impulse control problem and the numerical approximation given by the discrete-time solution to the penalized BSDE with jumps,
as the penalization coefficient and the number of time steps go to infinity.

The rest of the report is structured as follows:
in Section \ref{sec:CJBSDE-representation}, we set the considered impulse control problem in the mathematical framework of BSDEs with constrained jumps.
We present in Section \ref{sec:CV-projection-scheme} our penalization approach and provide a global convergence rate of our approximation.
In Section \ref{sec:numerical-appli}, our method is used for pricing multi-exercise options, so-called (normalized) Swing options.
This multiple optimal stopping time problem leads to a particularly degenerate three-dimensional impulse control problem.
We combine our BSDE-based approach with Monte Carlo techniques and deal with Swing options with a small maximal number of exercises rights,
due to large computational times. We compare our pricing results with those obtained
by a classical iteration-based approach proposed for example by \cite{CT08}.

%
\section{BSDE Representation for Impulse Control Problems}
\label{sec:CJBSDE-representation}

Let $T$ be a given time horizon. We work in a complete probability space $(\Omega, \Fcurs, \Proba)$, on which is defined a $d$-dimensional Brownian motion $W$ and a Poisson process $N$ with intensity $\lambda > 0$. We denote by $\F = (\Fcurs_t)_{t \geq 0}$, the augmentation of the natural filtration generated by $W$ and $N$, by $\F^{W} = (\Fcurs^{W}_t)_{t \geq 0}$ the one generated by $W$, and by $\Pcurs$, the $\sigma$-algebra of predictable sub-sets of $\Omega \times [0, T]$. \\

\noindent\textbf{Notation} \\

Throughout this report, the euclidean norm defined on $\R^d$ or on $\R$ will be indiscriminately denoted by $| \cdot |$.
The matrix transposition is denoted by $\perp$. 
In addition, unless specified otherwise, $C$ will denote a strictly positive constant depending only on Lipschitz constants of the coefficients of the problem, see assumptions $(\hypo)$ and $(\hypo')$ below, and constants $T$, $| b(0) |$, $| \sigma(0) |$, $| \gamma(0) |$, $| f(0) |$, $| \kappa(0) |$ and $| g(0) |$. 

Besides, we shall use the standard notations:
\begin{itemize}
	\item $\Scurs^2$, the set of real-valued c\`adl\`ag adapted processes $Y = (Y_t)_{0 \leq t \leq T}$ such that
	$$
	\left\| Y \right\|_{\Scurs^2} := \left( \Esp \left[ \sup_{0 \leq t \leq T} \left| Y_t \right|^{2} \right] \right)^{\frac{1}{2}} < \infty \ .
	$$
	\item $\Acurs^{2}$, the sub-set of $\Scurs^2$ such that
	$$
	\Acurs^{2} := \left\{K \in \Scurs^2: (K_t)_{0 \leq t \leq T} \text{ nondecreasing , } K_0 = 0 \right\} \ .
	$$
	
	\item $\LL^{2}_{\F}([0, T])$, the set of real-valued adapted processes $(\phi_t)_{0 \leq t \leq T}$ such that
	$$
	\Esp \left[ \int_0^T \left| \phi_t \right|^{2} dt \right] < \infty \ .
	$$
	
	\item $\LL^2(W)$, the set of real-valued $\Pcurs$-measurable processes $Z = (Z_t)_{s \leq t \leq r}$ such that
	$$
	\left\| Z \right\|_{\LL^2(W)} := \left( \Esp \left[ \int_0^T \left| Z_t \right|^{2} dt \right] \right)^{\frac{1}{2}} < \infty \ .
	$$
	
	\item $\LL^2(N)$, the set of real-valued $\Pcurs$-measurable processes $V = (V_t)_{s \leq t \leq r}$ such that
	$$
	\left\| V \right\|_{\LL^2(N)} := \left( \Esp \left[ \int_0^T \left| V_t \right|^{2} \lambda dt \right] \right)^{\frac{1}{2}} < \infty \ .
	$$
	\item $\Vcurs$ denotes the set of $\Pcurs$-measurable essentially bounded processes, valued in $(0, \infty)$
	and $\Vcurs^{p} = \left\{ \nu^{p} \in \Vcurs: \nu^{p}_t \leq p \text{ a.s.}\right\}$.
\end{itemize}

\subsection{A Class of Impulse Control Problems} 

We consider the class of impulse control problems whose value function is defined by:
\begin{equation}
v(t, x) = \sup_{u = (\tau_k)_{k \geq 1} \in \Ucurs_{(t, T]}} \Esp \left[ g(X^{t, x, u}_T) + \int^{T}_t f(X^{t, x, u}_s) ds + \sum_{\stackrel{k \geq 1}{t < \tau_k \leq T}} \kappa(X^{t, x, u}_{\tau_k^{-}}) \right].
\label{general-impulse-control-pb}
\end{equation}
An impulse strategy $u = (\tau_k)_{k \geq 1}$ is said to be admissible for problem \eqref{general-impulse-control-pb} (and belongs to $\Ucurs_{(t, T]}$)
if it is a non-decreasing sequence of $\F^W$-stopping times valued in $(t, T]$ (we set by convention $\tau_0 = t$) such that, if
\begin{equation*}
n^u_{(t, T]} := \sharp \left\{ k \geq 1: t < \tau_k \leq T \right\}
\label{def-nb-impulse}
\end{equation*}
denotes the (random) number of interventions of the strategy $u$ before time $T$, then
\begin{equation}
\Esp \left| n^u_{(t, T]} \right|^{2} < C,
\label{est-nb-impulse}
\end{equation}
for some universal constant $C>0$. 
The controlled state variable $X^{t, x, u}$ is a c\`adl\`ag process with dynamics
\begin{equation}
		X^{t, x, u}_s = x + \int_t^s b(X^{t, x, u}_r) dr + \int_t^s \sigma(X^{t, x, u}_r) d W_r + \sum_{t < \tau_k \leq s} \gamma(X^{t, x, u}_{\tau_k^{-}}), \quad \forall s \geq t.
	\label{X-controlled-jump-EDS}
\end{equation}
Between two successive intervention times $\tau_{k}$ and $\tau_{k+1}$, the state variable evolves as a diffusion process
and the controller makes an integral profit $f$. At each decided intervention time $\tau_{k}$,
he gives an impulse to the system: the state process jumps with a size
$X^{u}_{\tau_{k}} - X^{u}_{\tau_{k}^{-}} = \gamma(X^{u}_{\tau_k^{-}})$ and he obtains the intervention gain $\kappa$. \\

We consider standard assumptions on the coefficients of the problem:
\begin{description}
	\item[$(\hypo)$] $b: \R^d \mapsto \R^d$, $\sigma: \R^d \mapsto \R^{d \times d}$ and $\gamma: \R^d \mapsto \R^d$ are Lipschitz continuous \\
									 \hspace*{0.4cm} and $\gamma$ is uniformly bounded. \\
		 							 \hspace*{0.4cm} $f: \R^d \mapsto \R$, $\kappa: \R^d \mapsto \R$ and $g: \R^d \mapsto \R$ are Lipschitz continous. \\
		 							 
	\item[$(\hypo')$] \hspace*{-0.2cm} The maps $b$, $\sigma$, $\gamma$, and $g$ belong to $\Ccurs^{1}_b(\R^d)$ and have Lipschitz continuous \\
									 \hspace*{0.4cm} derivatives.
\end{description}

A straightforward computation using $(\hypo)$, \eqref{est-nb-impulse} and Gronwall's lemma shows that
\begin{equation}
\forall (t, x) \in [0, T] \times \R^d, \quad \sup_{u \in \Ucurs_{(t, T]}} \Esp \left[ \sup_{t \leq s \leq T} \left| X^{t, x, u}_s \right|^{2} \right] < \infty.
\label{est-X-u}
\end{equation} 

Finally, we will assume the existence of an optimal strategy $u^{*} = (\tau^{*}_{k})_{k \geq 1} \in \Ucurs_{(t, T]}$ to problem \eqref{general-impulse-control-pb}.
We refer for example to \cite{BL84} and \cite{OS07} in the infinite horizon case,
for specific conditions on the coefficients of the problem which ensures such an existence. 

\subsection{Link to BSDEs with Constrained Jumps}

Let us consider the BSDE with constrained jumps
\begin{equation}
	\begin{cases}
		Y_t = g(X_T) + \int^{T}_{t} f(X_s) ds - \int^{T}_{t} Z_s dW_s - \int^{T}_{t} V_s dN_s + \int^{T}_{t} dK_s, \quad \forall 0 \leq t \leq T \\
 		V_t + \kappa(X_{t^{-}}) \leq 0, \quad \forall 0 \leq t \leq T
	\end{cases}
	\label{BSDE-const-jump}
\end{equation}
where $X$ is the (uncontrolled) jump diffusion process with dynamics
\begin{equation}
d X_t = b(X_t) dt + \sigma(X_t) d W_t + \gamma(X_{t^{-}}) dN_t.
\label{X-uncontrolled-jump-EDS}
\end{equation}
Under $(\hypo)$, this SDE admits an unique solution in $\Scurs^2$ and it is shown in \cite{KM10} that
under the additional assumption $(\hypo_1)$ given below, \eqref{BSDE-const-jump} admits a unique \textit{minimal} solution $(Y, Z, V, K) \in \Scurs^2 \times \LL^2 (W) \times \LL^2 (N) \times \Acurs^2$ with $K$ predictable.

The solution $(Y, Z, V, K)$ is said to be \textit{minimal} if and only if it has the smallest component $Y$ in the (infinite) class of solutions to \eqref{BSDE-const-jump}.
$(Y_t)_{t \geq 0}$ is called the value process and jumps with a size $V_t = Y_{t} - Y_{t^{-}}$.

\begin{description}
	\item[$(\hypo_1)$] There exists a solution $(\tilde Y, \tilde Z, \tilde K) \in \Scurs^2 \times \LL^2 (W) \times \Acurs^{2}$ to
	\begin{equation*}
	Y_t = g(X_T) + \int^{T}_{t} f(X_s) ds - \int^{T}_{t} Z_s dW_s + \int^{T}_{t} \kappa(X_{s^{-}}) dN_s + \int^{T}_{t} dK_s.
\end{equation*}
	\item[$(\hypo'_1)$] $(\hypo_1)$ holds and $\tilde Y_t = \tilde v(t, X_t), \forall 0 \leq t \leq T$ for some $\tilde v$ with linear growth. \\
	
	\item[$(\hypo_2)$] There exists a non negative function $\varphi \in \Ccurs^2(\R^d)$ and a constant $\rho>0$ s.t.
	\begin{equation*}
	\begin{array}{ll}
			\Lcurs \varphi + f \leq \rho \varphi, \quad & \varphi - \Hcurs \varphi > 0, \\
			\varphi \geq g, \quad & \lim_{|x| \rightarrow \infty} \frac{\varphi(x)}{1 + |x|} = \infty,
	\end{array}
	\end{equation*}
	in which $\Lcurs$ is the local component of the generator of the process $X$ and $\Hcurs$, the intervention operator:
\begin{align*}
\Lcurs v(t, x) &= b(x) \cdot D_x v(t, x) + \frac{1}{2} \Tr \left( \sigma \sigma^{\perp} (x) D^{2}_{x} v(t, x) \right), \\
\Hcurs v(t, x) &= v(t, x + \gamma(x)) + \kappa(x) \ .
\end{align*}
\end{description}

Let $(Y^{t, x}_s, Z^{t, x}_s, V^{t, x}_s, K^{t, x}_s)_{t \leq s \leq T}$ be
the solution to \eqref{BSDE-const-jump} when $X \equiv (X^{t, x}_s)_{t \leq s \leq T}$
is the solution starting at $x$ in $t$ to SDE \eqref{X-uncontrolled-jump-EDS}.
Under assumptions $(\hypo)$, $(\hypo'_1)$ and $(\hypo_2)$, \cite{KM10} show that the solution to impulse control problem \eqref{general-impulse-control-pb}
coincides with initial value of component $Y^{t, x}$:
\begin{eqnarray}
Y^{t, x}_t = v(t, x)
\end{eqnarray}
and is equal to the (unique) solution with linear growth to quasi-variationnal inequality
\begin{equation} \begin{array}{rlcl}
		\min \left\{ - \frac{\partial v}{\partial t}(t, x) - \Lcurs v(t, x) - f(t, x) \right. ; & & & \\
	 \left. v(t, x) - \Hcurs v(t, x) \right\} & = & 0, & \ \forall (t, x) \in [0, T) \times \R^d , \\
	 \min \left\{ v (T^{-}, x) - g(x) ; v(T^{-}, x) - \Hcurs v(T^{-}, x) \right\} & = & 0, & \ \forall x \in \R^d . \\
\end{array}
\label{QVI-relaxed}
\end{equation}
Let us mention that in general, the terminal condition $v(T^{-}, \cdot) = g$ is irrelevant, because of the possible discontinuity of $Y$ in $T^{-}$ due to constraints:
the relaxed terminal condition in \eqref{QVI-relaxed} expresses the possibility of a jump at time $T^{-}$. 

\begin{remark} For a better intuition, the following interpretation to solution $(Y, Z, V, K)$ holds when assuming $v \in \Ccurs^{1,2}([0, T], \R^d)$:
\[ \begin{array}{rrcl}
		\forall 0 \leq t \leq T, & Y_t & = & v(t, X_{t}) \\
		& Z_t & = & \sigma(t, X_{t^{-}}) D_x v(t, X_{t^{-}}) \\
		& V_t & = & v(t, X_{t^{-}} + \gamma(X_{t^{-}})) - v(t, X_{t^{-}}) \\
		&			 & = & \Hcurs v (t, X_{t^{-}}) - v(t, X_{t^{-}}) - \kappa(X_{t^{-}}) \\
		& K_t & = & \int_0^t \left( - \frac{\partial v}{\partial t}- \Lcurs v - f \right) (s, X_s) ds. \\
\end{array} \]
The constraint in \eqref{BSDE-const-jump} means thus that the obstacle condition is satisfied, namely $v(t, X_{t^{-}}) - \Hcurs v (t, X_{t^{-}}) \geq 0$.
\end{remark}

%
\section{Convergence of the Numerical Approximation by Penalization}
\label{sec:CV-projection-scheme}


It does not seem possible to use the minimality condition 
of the solution to BSDE with constrained jumps \eqref{BSDE-const-jump}
directly in a numerical scheme.
We thus propose an approach by penalization of the jump constraint.
The penalized constraint is introduced in the BSDE driver: when the constraint is fulfilled,
this penalization term disappears, and otherwise penalizes the driver with an exploding factor $p$.

In Theorem \ref{theo:cv-rate-global}, we provide an explicit rate of convergence of our approximation with respect to the parameters introduced:
namely, the jump intensity $\lambda$, the penalization coefficient $p$ and the time step.
Such an error estimate is essential for numerical purposes (understanding of the numerical impact of those parameters)
and allows to adjust in practice the fineness of the time grid in relation to $(\lambda, p)$.

\subsection{Approximation by Penalization} 

Given a parameter value $p>0$, the penalized BSDE is:
\begin{align}
	Y^{p}_t &= g(X_T) + \int^{T}_{t} \left[ f(X_s) + p \left( V^{p}_s + \kappa(X_{s^{-}}) \right)^{+} \lambda \right] ds \label{BSDE-jump-penalized} \\
					& \hspace*{1.5cm} - \int^{T}_{t} Z^{p}_s dW_s - \int^{T}_{t} V^{p}_s dN_s, \quad \forall 0 \leq t \leq T \nonumber
\end{align}
which admits an unique solution $(Y^{p}, Z^{p}, V^{p}) \in \Scurs^2 \times \LL^2 (W) \times \LL^2 (N)$ from the classical theory of BSDEs with jumps.
In addition, the sequence of penalized solutions $(Y^{p}, Z^{p}, V^{p})_{p}$ tends in $\LL^{2}_{\F}([0, T]) \times \LL^2 (W) \times \LL^2 (N)$
to the minimal solution $(Y, Z, V)$ to \eqref{BSDE-const-jump} as $p$ goes to infinity, see \cite{KM10}.
Besides, the convergence of $(Y^{p})_p$ to $Y$ is monotone and increasing. \\

Let $(Y^{p, t, x}, Z^{p, t, x}, V^{p, t, x})$ be the solution to \eqref{BSDE-jump-penalized} when $X \equiv (X^{t, x}_s)_{t \leq s \leq T}$.
We consider the following error introduced by this penalization procedure:
\begin{equation}
\Ecurs^{p} := \sup_{0 \leq t \leq T} \left| v(t, x) - Y^{p, t, x}_t \right|.
\label{def-penal-error}
\end{equation}

For any $t < \eta \leq T$, let us introduce 
\begin{eqnarray}
v^{\eta}_T(t, x) = \sup_{u = (\tau_k)_{k \geq 1} \in \Ucurs_{(t, T-\eta]}} \Esp \left[ g(X^{t, x, u}_{T}) + \int^{T}_t f(X^{t, x, u}_s) ds
																										+ \sum_{\stackrel{k \geq 1}{t< \tau_k \leq T}} \kappa(X^{t, x, u}_{\tau_k^{-}}) \right]
\label{problem-eta}
\end{eqnarray}
which corresponds to initial problem \eqref{general-impulse-control-pb} 
restricted to the sub-set of strategies taking values in $(t, T - \eta]$.
We shall denote by $u^{\eta *} = (\tau^{\eta *}_k)_{k \geq 1}$ an $\eta^{\frac{1}{2}}$-optimal
strategy to problem \eqref{problem-eta} (the existence of an optimal strategy is not ensured)
and by $n^{\eta *}$ be the number of impulses in strategy $u^{\eta *}$ that is:
$$
n^{\eta *} := \sharp \left\{ k \geq 1: t < \tau^{\eta *}_k \leq T - \eta \right\}.
$$
We will use the following additional assumptions: 
\begin{description}
	\item[$(\hypo^n)$] There exists some $\bar n \in \N^*$ such that
			$$
			\forall j \geq \bar n, \quad \Proba \left( n^{\eta *} \geq j \right) \leq l(j)
			$$
			for some map $l$ such that 
			$
			l(j) \leq e^{- C j} \text{ for some some constant $C > 0$}.
			$
	\item[$(\hypo^{*})$] There exists a map $h$ such that $h(\varepsilon) = \Ocurs_{\varepsilon \rightarrow 0} ( \varepsilon^{\frac{1}{2}} )$ and
	$$
	\forall \varepsilon > 0, \quad \Proba \left( \min_{k \geq 1} \left| \tau^{\eta *}_{k+1} - \tau^{\eta *}_k \right|\leq \varepsilon  \right) \leq  h (\varepsilon).
	$$
\end{description}

\begin{remark}[Assumptions $(\hypo^n)$ and $(\hypo^*)$] Both assumptions $(\hypo^n)$ and $(\hypo^*)$ are directly satisfied for the problem of Swing options valuation since 
the number of exercises right is almost surely bounded by some $n_{\max}$ and there is some fixed time delay $\delta > 0$ between two consecutive interventions.

More generally, $(\hypo^n)$ is intuitively satisfied as soon as the controlled state variable is constrained almost surely
and admits jumps of constant sign, 
see Example \ref{example:Hn}.
$(\hypo^*)$ is verified for sufficiently smooth problems, see for example the case of optimal forest management studied in \cite{Ber11}.
%
%
\end{remark}

\begin{example} Let us assume that the state variable defined in \eqref{X-controlled-jump-EDS} is such that
\begin{itemize}
	\item $b$ is uniformly bounded and $\sigma > 0$ constant,
	\item for some constant $c > 0$,
	$$
	\sup_{x \in \R^d} \gamma(x) \leq - c,
	$$
\end{itemize}
and that the optimal strategy $u^{*}$ implies
$
X^{u^*}_T \geq 0 \text{ a.s.}
$
Then a straightforward computation shows that the (random) number $n^{*}_{(0, T]}$ of optimal impulses before time $T$ satisfies, for any $a > 0$,
\begin{eqnarray*}
\Proba \left( n^{*}_{(0, T]} > n \right) = \Ocurs \left( e^{- an} \right) \quad \text{as } n \rightarrow + \infty.
\end{eqnarray*}
\label{example:Hn}
\end{example}

\begin{proposition} Let assumptions $(\hypo)$, $(\hypo^{n})$, $(\hypo^{*})$, $(\hypo'_1)$ and $(\hypo_2)$ be satisfied.
Then the penalization error in \eqref{def-penal-error} admits the following bound as $p$ goes to infinity:
$$
\Ecurs^{p} \leq C \left( \frac{\bar n {\bar C}^{\bar n}}{(\lambda p)^{\frac{1}{2} - \alpha}} \right), \quad \forall \alpha \in \left(0, \frac{1}{2}\right).
$$
for some constants $C > 0$ and $\bar C > 1$, which do not depend either on $\lambda$, $p$, $\bar n$ or $\alpha$.
\label{prop:cv-rate-penal}
\end{proposition}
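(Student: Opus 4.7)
Since \cite{KM10} yields monotone increasing convergence $Y^{p,t,x}_t \uparrow v(t,x)$, the error $\Ecurs^p$ is nonnegative and I need only upper-bound $v(t,x) - Y^{p,t,x}_t$. My plan is to combine a Girsanov-type dual representation of $Y^p$ as a supremum over $\nu \in \Vcurs^p$ with an explicit construction of a particular control $\nu^p$ built from the $\eta^{1/2}$-optimal strategy $u^{\eta*} = (\tau_k^{\eta*})$ of the restricted problem $v_T^\eta$. The driver $F(V) = p\lambda(V+\kappa)^+ = \sup_{\nu\in(0,p]}\lambda\nu(V+\kappa)$ is convex in $V$, and a standard Girsanov linearization then yields
$$
Y^{p,t,x}_t = \sup_{\nu \in \Vcurs^p} \Esp^{\Proba^{\nu}}\!\left[g(X_T) + \int_t^T f(X_s)\,ds + \int_t^T \lambda\nu_s\kappa(X_{s^-})\,ds\right],
$$
where under $\Proba^\nu$ the Poisson process $N$ has intensity $\lambda\nu_s$. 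The control $\nu$ thus plays the role of a continuous-rate proxy for an impulse schedule: a burst of $\nu$ up to the value $p$ on a short interval forces a cluster of jumps of $X$ that emulates a single Dirac impulse.

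By a Lipschitz/Gronwall argument based on $(\hypo)$ and \eqref{est-X-u}, $|v(t,x) - v_T^\eta(t,x)| \le C\eta^{1/2}$, so working with $v_T^\eta$ costs only $O(\eta^{1/2})$, absorbed at the end. Condition on $\{n^{\eta*} \le \bar n\}$; by $(\hypo^n)$ together with Cauchy--Schwarz and \eqref{est-X-u}, its complement contributes at most $O(\bar n\, e^{-C\bar n})$. Then, with $\varepsilon_p := (\lambda p)^{-1}$ and a fixed small $c > 0$, define
$$
\nu^p_s := p\,\Ind_{\bigcup_{k \le \bar n}[\tau_k^{\eta*}-\varepsilon_p,\,\tau_k^{\eta*})}(s) + c\,\Ind_{\text{elsewhere}}(s).
$$
Assumption $(\hypo^{*})$ ensures that the emulation intervals $[\tau_k^{\eta*}-\varepsilon_p, \tau_k^{\eta*})$ are pairwise disjoint except on an event of probability $h(\varepsilon_p) = O(\varepsilon_p^{1/2})$.

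Under $\Proba^{\nu^p}$, on each $I_k^p = [\tau_k^{\eta*}-\varepsilon_p, \tau_k^{\eta*})$ the process $N$ has intensity $\lambda p$ and mean number of jumps equal to $\lambda p \varepsilon_p = 1$; the integral $\int_{I_k^p} \lambda p\, \kappa(X_{s^-})\,ds$ therefore asymptotically reproduces the impulse reward $\kappa(X^{u^{\eta*}}_{\tau_k^{\eta*-}})$, while an $L^2$ estimate shows that the difference between the resulting random sub-impulse cluster and a Dirac jump at $\tau_k^{\eta*}$ is $O(\sqrt{\varepsilon_p}) = O((\lambda p)^{-1/2})$. A discrete Gronwall iteration over the $\bar n$ consecutive impulses, using the Lipschitz constants of $b,\sigma,\gamma,f,\kappa,g$, propagates this per-impulse error into a global error of order $\bar n\,\bar C^{\bar n}(\lambda p)^{-1/2}$. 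The extra $\alpha$ in the exponent $1/2 - \alpha$ absorbs logarithmic corrections from Poisson large-deviation tails, from the spurious background jumps at rate $\lambda c$, and from balancing $\eta \sim \varepsilon_p$.

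The main technical obstacle is the pathwise $L^2$-coupling of $X$ under $\Proba^{\nu^p}$ with $X^{u^{\eta*}}$: one must simultaneously handle the intrinsic Poisson randomness within each $I_k^p$ (number and location of sub-jumps), rare missed impulses occurring with probability $e^{-\lambda p \varepsilon_p}$, possible spurious jumps at the background rate $\lambda c$ elsewhere, and the indexing of these events by the random $k$-th actual jump time. The multiplicative factor $\bar C^{\bar n}$ is intrinsic to the Gronwall iteration across consecutive impulses and cannot be avoided by the present method; it is precisely to render the resulting bound meaningful that $(\hypo^n)$ imposes an exponential tail on $n^{\eta*}$.
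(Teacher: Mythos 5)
Your overall strategy --- the dual/Girsanov representation of $Y^p$ as a supremum over $\nu\in\Vcurs^p$, a coupling with an $\eta^{1/2}$-optimal strategy of the truncated problem $v^\eta_T$, conditioning on $\{n^{\eta*}\le\bar n\}$ via $(\hypo^n)$, using $(\hypo^*)$ to separate the emulation windows, and a Gronwall iteration across impulses producing the $\bar n\,\bar C^{\bar n}$ factor --- is indeed the route taken by the paper. However, your specific construction of the control $\nu^p$ breaks down in two places.

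First, and most seriously, the control
$$
\nu^p_s = p\,\Ind_{\bigcup_{k\le\bar n}[\tau_k^{\eta*}-\varepsilon_p,\,\tau_k^{\eta*})}(s) + c\,\Ind_{\text{elsewhere}}(s)
$$
is not $\Pcurs$-measurable, because the $\tau_k^{\eta*}$ are genuine $\F^W$-stopping times, not deterministic dates: deciding at time $s$ whether $s\in[\tau_k^{\eta*}-\varepsilon_p,\tau_k^{\eta*})$ requires knowing that $\tau_k^{\eta*}$ will occur in $(s,s+\varepsilon_p]$, which is not $\Fcurs_s$-measurable. An anticipating control is not admissible in the representation \eqref{penalized-functionnal-rep}, so your lower bound on $Y^{p,t,x}_t$ does not follow. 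The paper resolves exactly this issue by placing the burst \emph{after} the optimal impulse: $\nu^p_s=p$ as long as the Cox process $N^p$ has strictly fewer jumps than the number of $\tau^{\eta*}_k$ that have already occurred, and $\nu^p_s=0$ otherwise. This is predictable, since $\{\tau^{\eta*}_k<s\}$ and $N^p_{s^-}$ are both known at time $s$, and it makes the delay $\tau^p_k-\tau^{\eta*}_k$ an exponential of parameter $\lambda p$.

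Second, even granting the construction, choosing a fixed window of length $\varepsilon_p=(\lambda p)^{-1}$ makes the number of jumps of $N$ on $I^p_k$ a Poisson random variable of mean $1$, so the probability of \emph{no} jump in $I^p_k$ is $e^{-1}$, which does not vanish as $p\to\infty$. A missed impulse changes $X$ by an $O(1)$ amount (through $\gamma$), so the resulting $L^2$ contribution is $O(1)$, not $O((\lambda p)^{-1/2})$, and your per-impulse estimate fails. (Inflating the window to $O(\log(\lambda p)/(\lambda p))$ repairs the tail but re-introduces a logarithmic loss you would then need to track carefully.) The paper's ``run at rate $\lambda p$ until the jump occurs'' construction avoids both defects simultaneously: exactly one jump per impulse, with a delay whose $L^2$-norm is $O((\lambda p)^{-1})$, and no anticipation.
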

\begin{proof} We provide the main arguments of the proof and refer the reader to \cite{Ber11} for more details.
%
%
%
The main idea comes from the following explicit functional representation available for $Y^{p, t, x}$, see \cite{KM10}:
\begin{equation}
Y^{p, t, x}_t = \esssup_{\nu^{p} \in \Vcurs^{p}} \Esp^{\nu^{p}} \left[ g(X^{t, x}_T) + \int^{T}_{t} f(X^{t, x}_s) ds + \int^{T}_{t} \kappa(X^{t, x}_{s^{-}}) dN_s \right],
\label{penalized-functionnal-rep}
\end{equation}
where $\Esp^{\nu^p}$ denotes the expectation under the probability measure $\Proba^{\nu^p}$ equivalent to $\Proba$ on $(\Omega, \Fcurs_T)$ with Radon-Nikodym density
\begin{equation*}
\left. \frac{d \Proba^{\nu^p}}{d \Proba} \right|_{\Fcurs_{T}}
= e^{- \int^{T}_{0} (\nu^p_s - 1) \lambda ds } e^{ \int^{T}_{0} \ln ( \nu^p_s ) dN_s }.
\end{equation*}
The specificity of such a change of measure is that it impacts only the jump parts of the processes:
under $\Proba^{\nu^p}$, the Brownian motion $W$ remains unchanged whereas $N$ has a (stochastic) intensity $(\lambda \nu^p_s)_{s \geq 0}$.
We shall denote by $N^{p}$, the doubly stochastic Poisson process (Cox process) with intensity $(\lambda \nu^{p}_s)_{s \geq 0}$ under $\Proba$,
by $(\tau^{p}_k)_{k \geq 1}$ the sequence of its jump dates and by $X^p$ the solution to \eqref{X-uncontrolled-jump-EDS} driven by $N^p$.

In view of \eqref{problem-eta} and \eqref{penalized-functionnal-rep}, 
we introduce 
a convenient measure change, which intuitively forces the penalized solution to jump as soon as possible after that an optimal impulse occurs:
\begin{equation}
	\forall s \geq 0, \hspace*{0.2cm} \nu^{p}_s =
		\begin{cases}
		p & \text{ if } \sum_{k \geq 1} \Ind_{\left\{ \tau^{\eta *}_k < s \right\}} \neq N^{p}_s, \\
		0 & \text{ else. }
		\end{cases}
		\label{measure-change}
\end{equation}
Notice that $\nu^p$ is a $\Pcurs$-measurable process bounded by $p$ a.s.
By definition of the counting process $N^p$, 
\begin{equation*}
\forall s \geq 0, \quad \Proba \left( \tau^{p}_k - \tau^{\eta *}_k > s \left. \right| \sum_{j \geq 1} \Ind_{\left\{ \tau^p_j \leq \tau^{\eta *}_k \right\}} = k-1 \right) = e^{- \lambda p s}.
\end{equation*}
In other words, the increment $\tau^{p}_k - \tau^{\eta *}_k$ has an exponential distribution
with parameter $(\lambda p)$, conditionally to the fact that 
$X^p$ has jumped one time less than $X^{u^{\eta*}}$.
This allows us to compute an estimate of the distance
$$
\sup_{0 \leq t \leq T} \left| v^{\eta}_T(t, x) - Y^{p, t, x}_t \right|,
$$
which holds for $\Ecurs^{p}$ by sending $\eta$ to $0$ together with a continuity argument
of the value function in its maturity variable.  
\qed
\end{proof}

\subsection{Convergence Rate of the Numerical Scheme} 

Given a regular time grid $\pi = \left\{ t_0 = 0, t_1, \ldots, t_N = T \right\}$,
we assume that the solution $X$ to \eqref{X-uncontrolled-jump-EDS} can be simulated on $\pi$
either perfectly or by using an Euler scheme and denote its discrete-time version by $X^{\pi}$.
%
Along the lines of \cite{BouE08}, we consider the following backward discrete-time scheme
for numerically solving the penalized BSDE with jumps \eqref{BSDE-jump-penalized}, 
\begin{equation}
	\begin{cases}
		Y^{p, \pi}_{t_N} = g(X^{\pi}_{t_N}) \\
		\forall t_n \in \pi, t_n < T : \\
		\hspace*{0.6cm} V^{p, \pi}_{t_n} = \frac{1}{\lambda \Delta t_{n+1}} \Esp \left[ Y^{p, \pi}_{t_{n+1}} \Delta \tilde N_{t_{n+1}} | \Fcurs_{t_n} \right] \\
		\hspace*{0.6cm} Z^{p, \pi}_{t_n} = \frac{1}{\Delta t_{n+1}} \Esp \left[ Y^{p, \pi}_{t_{n+1}} \Delta W_{t_{n+1}} | \Fcurs_{t_n} \right] \\
		\hspace*{0.6cm} Y^{p, \pi}_{t_n} = \Esp \left[ Y^{p, \pi}_{t_{n+1}} | \Fcurs_{t_n} \right] \\
		\hspace*{1.6cm} + \left[ f(X^{\pi}_{t_n}) + \left( p \left( V^{p, \pi}_{t_n} + \kappa(X^{\pi}_{t_n}) \right)^{+} - V^{p, \pi}_{t_n} \right) \lambda \right] \Delta t_{n+1} \\
	\end{cases}
\label{bwd-scheme-YZV}
\end{equation} 
where $\Delta t_{n+1} = t_{n+1} - t_{n}$, $\Delta W_{t_{n+1}}$ is the Brownian increment on $[t_n, t_{n+1}]$ and
$\Delta \tilde N_{t_{n+1}}$ the compensated version of the Poisson increment $\Delta N_{t_{n+1}}$ on $[t_n, t_{n+1})$. \\



We consider the classical discretization error between the continuous-time solution $(Y^p, Z^p, V^p)$ in \eqref{BSDE-jump-penalized} and its discrete-time approximation $(Y^{p, \pi}, Z^{p, \pi}, V^{p, \pi})$ in \eqref{bwd-scheme-YZV}, that is
\begin{align*}
\Ecurs^{\pi}(Y^{p}) & := \left( \max_{n<N-1} \sup_{t_n \leq t \leq t_{n+1}} \Esp \left| Y^{p}_t - Y^{p, \pi}_{t_n} \right|^{2} \right)^{\frac{1}{2}} \\
\Ecurs^{\pi}(Z^{p}) & := 
\left( \sum_{n = 0}^{N-1} \int_{t_n}^{t_{n+1}} \Esp \left| Z^{p}_t - Z^{p, \pi}_{t_n} \right|^{2} dt \right)^{\frac{1}{2}} \\
\Ecurs^{\pi}(V^{p}) & := 
\left( \sum_{n = 0}^{N-1} \int_{t_n}^{t_{n+1}} \Esp \left| V^{p}_t - V^{p, \pi}_{t_n} \right|^{2} \lambda dt \right)^{\frac{1}{2}}.
\end{align*}

Because of the lack of first order regularity of the driver of the penalized BSDE
\begin{equation}
f^p (x, v) := f(x) + \left( p (v + \kappa(x))^{+} - v \right) \lambda, \quad \forall (x, v) \in \R^d \times \R, 
\label{driver-YZV}
\end{equation}
classical regularization arguments for the FBSDE coefficients and Malliavin differentiation representations allow us to provide an explicit
convergence rate of order $|\pi|^{\frac{1}{2}}$ for errors $\Ecurs^{\pi}(Y^{p})$ and $\Ecurs^{\pi}(V^{p})$,
but only of order $|\pi|^{\frac{1}{4}}$ for error $\Ecurs^{\pi}(Z^{p})$, see Proposition \ref{prop:cv-error-discret}.

The impact of the penalization coefficient $p$ on the convergence of backward discrete-time schemes is well known in practice,
even if there does not exist, to our best knowledge, any explicit computation in the literature
(see for example the numerical experiments of \cite{Lem05} for the resolution by penalization of a BSDE with one reflecting barrier).
Basically, as $p$ increases at fixed discrete-time step, the quantity $f^{p} (\cdot) \Delta t_{n+1}$ explodes,
leading to a numerical explosion of the approximate values $Y^{p, \pi}$, see \eqref{bwd-scheme-YZV}.

We show rigorously in Proposition \ref{prop:cv-error-discret} that the discretization error grows exponentially
with $(\lambda p^2)$. This is due to the linear dependence in $\lambda$ and $p$ of the BSDE driver $f^p$, see \eqref{driver-YZV},
and estimate computations based on the use of Gronwall's lemma.

\begin{proposition} Assume $(\hypo)$. Then, as soon as
\begin{equation}
		|\pi| = \Ocurs \left( \frac{1}{\lambda p^2} \right),
\label{CN-cv-scheme}
\end{equation}
we get the following bounds as $p$ goes to infinity:
\begin{align*}
		\Ecurs^{\pi}(Y^p) & \leq C \left( (1 + \lambda)^2 \lambda p \bar C^{\lambda p^2} |\pi|^{\frac{1}{2}} \right) \\
		\Ecurs^{\pi}(V^p) & \leq C \left( (1 + \lambda)^2 \lambda^{\frac{3}{2}} p^{2} \bar C^{\lambda p^2}  |\pi|^{\frac{1}{2}} \right).
\end{align*}
Under $(\hypo')$, there exists a version of $Z^{p}$ such that, 
\begin{align*}
		\Ecurs^{\pi}(Z^p) & \leq C \left( (1 + \lambda)^2 \lambda^{\frac{5}{2}} p^{3} \bar C^{\lambda p^2} |\pi|^{\frac{1}{4}} \right),
\end{align*}
for some constants $C > 0$ and $\bar C > 1$, which do not depend either on $\lambda$, $p$ or $|\pi|$.
\label{prop:cv-error-discret}
\end{proposition}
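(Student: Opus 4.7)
The plan is to follow the discrete-time backward scheme analysis of \cite{BouE08}, carefully tracking the polynomial and exponential dependence on $\lambda$ and $p$ induced by the penalized driver $f^p$ in \eqref{driver-YZV}, whose Lipschitz constant with respect to $v$ is of order $\lambda p$. The condition \eqref{CN-cv-scheme} is precisely the constraint that guarantees $\lambda p^2 |\pi|$ stays bounded, so that a discrete Gronwall inequality with prefactor $1 + C \lambda p^2 |\pi|$ produces, after $N$ iterations, the exponential factor $\bar C^{\lambda p^2 T}$ appearing in the statement.

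First I would establish standard a priori estimates on $(Y^p, Z^p, V^p) \in \Scurs^2 \times \LL^2(W) \times \LL^2(N)$ by applying It\^o's formula to $|Y^p_t|^2$, using the Lipschitz property of $f^p$, and closing a Gronwall estimate. Second, I would obtain the time-regularity of $Y^p$ and $V^p$: writing the BSDE on $[t_n, t_{n+1}]$, It\^o's formula together with the a priori estimates yield
$$
\sup_{t \in [t_n, t_{n+1}]} \Esp |Y^p_t - Y^p_{t_n}|^2 + \int_{t_n}^{t_{n+1}} \Esp |V^p_s - V^p_{t_n}|^2 \lambda \, ds
\leq C (1+\lambda)^2 \bar C^{\lambda p^2} |\pi|,
$$
which, summed over the grid, gives the desired $|\pi|^{1/2}$ rate for $\Ecurs^\pi(Y^p)$ and $\Ecurs^\pi(V^p)$.

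The delicate ingredient is the time-regularity of $Z^p$, needed to bound $\Ecurs^\pi(Z^p)$. Since $f^p$ contains the positive-part function, it fails to be $\Ccurs^1$, and the usual Malliavin-calculus representation of $Z^p$ does not apply directly. Under $(\hypo')$, I would regularize by replacing $(\cdot)^+$ with a smooth nondecreasing mollification $\phi_\varepsilon$, giving a driver $f^{p,\varepsilon}$ and coefficients all belonging to $\Ccurs^1_b$ with Lipschitz derivatives. For the associated penalized FBSDE, standard Malliavin calculus arguments for BSDEs with jumps produce a continuous version of the Malliavin derivative $D_t Y^{p,\varepsilon}_s$ and yield a version of $Z^{p,\varepsilon}$ with $\Esp|Z^{p,\varepsilon}_t - Z^{p,\varepsilon}_s|^2 \leq C (1+\lambda)^4 \lambda^5 p^6 \bar C^{\lambda p^2} |t-s|$, the precise powers arising from the Lipschitz constants of $f^{p,\varepsilon}$ and its derivative. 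Stability estimates in the BSDE parameters then let me pass $\varepsilon \to 0$ to obtain a version of $Z^p$ with the same regularity, whence $\Ecurs^\pi(Z^p) \leq C (1+\lambda)^2 \lambda^{5/2} p^3 \bar C^{\lambda p^2} |\pi|^{1/4}$ after taking square roots.

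Finally, to compare the scheme \eqref{bwd-scheme-YZV} to the continuous solution at grid points, I would write the implicit Euler analogue of \eqref{BSDE-jump-penalized} and consider $\delta Y_n := Y^p_{t_n} - Y^{p,\pi}_{t_n}$, $\delta V_n$, $\delta Z_n$. Taking conditional expectations, using the tower property and the Lipschitz bound of $f^p$ (with constant $\lambda p$), I would derive
$$
\Esp|\delta Y_n|^2 \leq (1 + C \lambda p^2 |\pi|)\, \Esp|\delta Y_{n+1}|^2 + R_{n+1},
$$
where $R_{n+1}$ collects the path-regularity residuals controlled in the previous steps. The discrete Gronwall lemma combined with \eqref{CN-cv-scheme} then converts the prefactor into $\bar C^{\lambda p^2}$ and produces the announced bounds. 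The main obstacle is the $Z^p$-regularity step: the mollification argument requires uniform-in-$\varepsilon$ estimates on the Malliavin derivatives despite the blow-up of the Lipschitz constant of $\phi_\varepsilon'$, and a careful passage to the limit to transfer the bound to the original $Z^p$. Tracking the exact polynomial powers $(1+\lambda)^2 \lambda^{\cdot} p^{\cdot}$ multiplying $\bar C^{\lambda p^2}$ is then a bookkeeping exercise carried through each estimate.
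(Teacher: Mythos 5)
Your proposal follows the same broad path that the paper itself cites — It\^o/Gronwall a priori estimates, path-regularity of $(Y^p,V^p)$, a mollification-plus-Malliavin argument for $Z^p$, and a discrete Gronwall inequality converting $(1+C\lambda p^2|\pi|)^N$ into $\bar C^{\lambda p^2}$ under \eqref{CN-cv-scheme}. The reading of the Lipschitz constant $\lambda p$ of $f^p$ and of the role of condition \eqref{CN-cv-scheme} is correct, and the bookkeeping of $(1+\lambda)^2\lambda p$ and $(1+\lambda)^2\lambda^{3/2}p^2$ for $\Ecurs^\pi(Y^p)$ and $\Ecurs^\pi(V^p)$ is consistent with the claimed $|\pi|^{1/2}$ rates.

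There is, however, a genuine inconsistency in the $Z^p$-regularity step, which is precisely the delicate part. You write
$\Esp|Z^{p,\varepsilon}_t - Z^{p,\varepsilon}_s|^2 \leq C(1+\lambda)^4\lambda^5 p^6\,\bar C^{\lambda p^2}\,|t-s|$
with a constant not depending on $\varepsilon$, then claim this passes unchanged to the limit $\varepsilon\to 0$, and conclude $\Ecurs^\pi(Z^p)=\Ocurs(|\pi|^{1/4})$. But a uniform bound $\Esp|Z^p_t - Z^p_s|^2 \leq C|t-s|$ for the limiting process would give $\Ecurs^\pi(Z^p)^2 = \Ocurs(|\pi|)$, hence $\Ecurs^\pi(Z^p) = \Ocurs(|\pi|^{1/2})$ — a rate the proposition does \emph{not} claim, precisely because the driver $f^p$ is only Lipschitz and not $\Ccurs^1$. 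The Malliavin derivative $D Y^{p,\varepsilon}$ satisfies a linear BSDE whose coefficient involves $\phi_\varepsilon'$, and the time-regularity of its solution (hence of $Z^{p,\varepsilon}$) requires Lipschitz control of $\phi_\varepsilon'$, with constant of order $1/\varepsilon$; the $|t-s|$-bound on $Z^{p,\varepsilon}$ therefore blows up as $\varepsilon\to 0$. The correct route is to combine this $\varepsilon$-dependent regularity with a stability estimate $\|Z^{p,\varepsilon}-Z^p\|_{\LL^2(W)}^2 = \Ocurs(\varepsilon^{a})$ for some $a>0$, and optimize over $\varepsilon$ as a function of $|t-s|$ — this is what degrades the regularity of $Z^p$ to $\Esp|Z^p_t-Z^p_s|^2 \leq C\,|t-s|^{1/2}$ and produces $\Ecurs^\pi(Z^p)=\Ocurs(|\pi|^{1/4})$. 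You identify the obstacle (``uniform-in-$\varepsilon$ estimates ... despite the blow-up of the Lipschitz constant of $\phi_\varepsilon'$'') but then write an inequality that pretends it has been overcome; as written, the argument proves either nothing (no uniform bound exists) or the wrong, too-strong rate $|\pi|^{1/2}$. To complete the proof you must make the $\varepsilon$-dependence of the $Z^{p,\varepsilon}$-regularity explicit, establish the stability estimate in $\varepsilon$, and carry out the optimization, tracking how the extra powers of $\lambda$ and $p$ enter each of those two ingredients.
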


\begin{proof} This follows from the same arguments as \cite{Eli08}: computations using It\^{o} and Gronwall's lemma and
regularization and Malliavin differentiation arguments applied to the penalized BSDE with jumps \eqref{BSDE-jump-penalized}.
We refer the reader to \cite{Ber11} for a detailed proof.
\qed
\end{proof}

Propositions \ref{prop:cv-rate-penal} and \ref{prop:cv-error-discret} enable us to establish a global convergence rate of the error introduced by our approximation by penalization.

\begin{theorem} Let the assumptions of Proposition \ref{prop:cv-rate-penal} be satisfied. Then
\begin{align}
		\Ecurs^{p}+ \Ecurs^{\pi}(Y^p) & 
		\leq C \left( \frac{1}{(\lambda p)^{\frac{1}{2} - \alpha}} + (1 + \lambda)^2 \lambda p \bar C^{\lambda p^2} |\pi|^{\frac{1}{2}} \right),
		\quad \forall \alpha \in \left(0, \frac{1}{2} \right) \label{cv-globale}
\end{align}
for some constants $C>0$ and $\bar C>1$, which do not depend either on $\lambda$, $p$, $|\pi|$, $\bar n$ or $\alpha$. Thus, for a sufficiently small time step $|\pi|$ with respect to $\lambda$ and $p$, the global error is such that
\begin{eqnarray*}
		\left[ \Ecurs^{p} + \Ecurs^{\pi}(Y^p) \right]^{*}
		&= \Ocurs \left( \frac{1}{(\lambda p)^{\frac{1}{2} - \alpha}} \right), \quad \forall \alpha \in \left(0, \frac{1}{2}\right).
\end{eqnarray*}

\label{theo:cv-rate-global}
\end{theorem}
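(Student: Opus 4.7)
The plan is to obtain the theorem as a direct consequence of the two preceding propositions, via a triangle inequality on the global error, followed by a balancing argument for the time step.

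First, I would decompose the quantity of interest using the triangle inequality: for every $(t,x) \in [0,T] \times \R^d$,
\[
\left| v(t,x) - Y^{p,\pi,t,x}_{t_n} \right| \; \leq \; \left| v(t,x) - Y^{p,t,x}_t \right| + \left| Y^{p,t,x}_t - Y^{p,\pi,t,x}_{t_n} \right|,
\]
so that the combined error $\Ecurs^{p} + \Ecurs^{\pi}(Y^p)$ is controlled by the penalization error of Proposition \ref{prop:cv-rate-penal} plus the discretization error of Proposition \ref{prop:cv-error-discret}. The first proposition yields
\(
\Ecurs^{p} \leq C \, \bar n \bar C^{\bar n} / (\lambda p)^{\frac{1}{2} - \alpha}
\)
for any $\alpha \in (0,1/2)$. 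Since $\bar n$ and the associated $\bar C$ coming from $(\hypo^n)$ are intrinsic to the optimal strategy and do not depend on $\lambda$, $p$, $|\pi|$, or $\alpha$, I would absorb the factor $\bar n \bar C^{\bar n}$ into the generic constant $C$ appearing in the statement. The discretization bound from Proposition \ref{prop:cv-error-discret} provides the second summand directly, provided the compatibility condition $|\pi| = \Ocurs(1/(\lambda p^2))$ is met, which is explicitly required in the hypothesis ``sufficiently small time step $|\pi|$ with respect to $\lambda$ and $p$.'' Adding the two bounds gives \eqref{cv-globale}.

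For the asymptotic statement, I would choose $|\pi|$ small enough so that the discretization term becomes negligible compared to the penalization term. Concretely, it is enough to require
\[
(1+\lambda)^2 \lambda p \, \bar C^{\lambda p^2} \, |\pi|^{\frac{1}{2}} \; \leq \; \frac{1}{(\lambda p)^{\frac{1}{2} - \alpha}},
\]
that is, $|\pi|^{1/2} \leq 1 / \bigl[ (1+\lambda)^2 \lambda p \, \bar C^{\lambda p^2} (\lambda p)^{\frac{1}{2} - \alpha} \bigr]$. With this calibration (which is strictly stronger than \eqref{CN-cv-scheme} and thus compatible with the hypothesis of Proposition \ref{prop:cv-error-discret}), the right-hand side of \eqref{cv-globale} reduces to $\Ocurs\bigl(1/(\lambda p)^{1/2-\alpha}\bigr)$, as claimed.

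There is essentially no hidden obstacle at this stage, since the heavy lifting has been done in Propositions \ref{prop:cv-rate-penal} and \ref{prop:cv-error-discret}. The only delicate point is the bookkeeping of constants: one has to verify that the constant $C$ and the base $\bar C$ can indeed be chosen independently of $\lambda$, $p$, $|\pi|$, $\bar n$ and $\alpha$ \emph{simultaneously} in the two bounds, which is the case since both propositions track this dependence explicitly. Balancing the penalization and discretization regimes, which is the only practical consequence one has to spell out, follows from the elementary inequality above.
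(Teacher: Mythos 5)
Your proof is essentially the paper's implicit argument: the theorem is obtained by adding the bounds of Propositions \ref{prop:cv-rate-penal} and \ref{prop:cv-error-discret}, then choosing $|\pi|$ small enough so that the discretization term is dominated by the penalization term. The paper gives no explicit proof; it simply states that the two propositions ``enable us to establish'' the theorem, which is exactly the mechanism you describe.

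Two remarks on bookkeeping. First, your opening triangle-inequality decomposition of $|v(t,x) - Y^{p,\pi,t,x}_{t_n}|$ is superfluous: the theorem's left-hand side is literally the sum $\Ecurs^{p}+\Ecurs^{\pi}(Y^{p})$, not a bound on $|v - Y^{p,\pi}|$, so you can directly add the two propositions' estimates without any intermediate decomposition. (Indeed $\Ecurs^{p}$ is a pointwise supremum while $\Ecurs^{\pi}(Y^{p})$ is an $\mathrm{L}^2$-type error, so the triangle inequality you write would not yield the stated sum in any case.) Second, a subtle point you should flag more explicitly: absorbing the factor $\bar n \bar C^{\bar n}$ from Proposition \ref{prop:cv-rate-penal} into $C$ necessarily makes the new constant depend on $\bar n$, which is in formal tension with the theorem's claim that $C$ ``does not depend on $\bar n$.'' There is no way to remove this dependence; the statement in the paper is slightly imprecise here, and you were right to keep $\bar n$ out of your own list of quantities that $C$ is independent of. Apart from these presentational issues your calibration of $|\pi|$ for the asymptotic conclusion is correct and compatible with the standing condition \eqref{CN-cv-scheme}.
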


\begin{remark}[Global convergence rate]
At fixed time step $|\pi|$, the convergence rate strongly deteriorates as $\lambda$ or $p$ increases, see \eqref{cv-globale}.
The numerical method is more sensitive to $p$ than to $\lambda$ according to \eqref{CN-cv-scheme} and \eqref{cv-globale}.
\eqref{CN-cv-scheme} constitutes a necessary condition for the convergence of the backward discrete-time scheme.
In practice, the penalization parameter will need to be chosen relatively small 
and the time step $|\pi|$ very small to avoid multiple jump times on each time step (otherwise, this introduces a bias).
\end{remark}

%
\section{Application to Swing Options Valuation}
\label{sec:numerical-appli}

In this section, our method is applied for the valuation of Swing options in the Black and Scholes framework.
This constitutes a multiple optimal stopping time problem which can be reformulated
as a particularly degenerate three-dimensional impulse control problem. 
We have been able to achieve convergence
for a small maximal number of exercises rights ($n_{\max} \leq 2$) due to the slow computational speed of our method.

\subsection{Swing Options Valuation as an Impulse Control Problem} 

We consider a (normalized) Swing option: the holder of the option is given a maximal number of exercise rights, say $n_{\max} \geq 1$, and has the opportunity to sell whenever he wants over a time period $[0, T]$ an underlying asset against a fixed strike price.

For some fixed strike price $K$, we shall denote by $\phi(s) = (K-s)^{+}$ the reward function corresponding to the profit made at each exercise date and by $S$ the underlying asset spot price. We concentrate here on the risk-neutral Black and Scholes framework in which $r > 0$ is a constant interest rate and the spot price process is defined by
\begin{equation}
S_t = s \exp \left\{ (r - \frac{1}{2} \sigma^{2}) t + \sigma W_t \right\}, \quad \forall t \geq 0,
\label{prix-BS}
\end{equation}
where $\sigma > 0$ denotes the volatility coefficient and $s$ an initial price value. \\

A delay $\delta>0$ between two consecutive exercise dates is introduced. Indeed, without any delay, the optimal strategy would consist in $n_{\max}$ simultaneous exercise at an unique optimal date (so that this option is equivalent to $n_{\max}$ identical American options).
%
The value of such an option at time $0$ can be written as the solution to the following multiple optimal stopping time problem
\begin{eqnarray}
\bar v^{(n_{\max})}(s) = \sup_{u = (\tau_k)_{k \geq 1} \in \Ucurs^{\delta, n_{\max}}_{(0, T]}} \Esp \left[ \sum_{k \geq 1} e^{- r \tau_{k}} \phi( S_{\tau_{k}} ) \right]
\label{v-Swing}
\end{eqnarray}
in which a strategy $u = (\tau_k)_{k \geq 1}$
is said to be admissible and belongs to $\Ucurs^{\delta}_{(0, T]}$
if and only if it is a vector of $\F^{W}$-stopping times valued in $(0, T]$ with maximal lenght $n_{\max}$ 
which satisfied the constraint on delay, that is
	\[ \begin{array}{rcl}
			\forall k \geq 1, \quad \tau_{k+1} - \tau_{k} \geq \delta. 
	\end{array} \]

As a multiple optimal stopping time problem, this problem can be formulated as a particular (and strongly degenerate) impulse control problem.
An impulse control corresponds to a sequence of exercise dates and the
intervention gain is written as the payoff function $\phi$ multiplied by an indicator
function which allows to satisfy both the constraint on the number of exercise rights and the constraint on delay between exercise dates. Namely:
\begin{align}
v(s) = \sup_{u = (\tau_k)_{k \geq 1} } \Esp \left[ \sum_{\stackrel{k \geq 1}{\tau_k \leq T}} e^{- r \tau_{k}} \phi( S_{\tau_{k}} ) \Ind_{ \left\{ \big( \Theta^{k-1}_{\tau^{-}_k} \geq \delta \big) \cap \big( Q^{u}_{\tau^{-}_k} < n_{\max} \big) \right\} } \right], 
\label{v-Swing-impulse-control}
\end{align}
where $u = (\tau_k)_{k \geq 1}$ is the sequence of $\F^{W}$-stopping times valued in $(0, T]$ 
and two additional state variables which both are controlled and discontinuous (c\`adl\`ag) processes are introduced:
\begin{itemize}
	\item $Q^{u}$ counts the number of exercise rights used before considered time
		$$
				Q^{u}_0 = 0, \quad Q^{u}_t = \sharp \left\{k \geq 1, \tau_k \leq t\right\}, \quad \forall t \geq 0.
		$$
	\item $\Theta^{u}_t := \Theta^{k}_t = \inf \left\{ t - \tau_k, \tau_k \leq t \right\}$ corresponds to the delay between $t$ and last exercise date
	$$
	\Theta^{k}_t = t - \tau_{k}, \forall \tau_{k} \leq t < \tau_{k+1}, \quad \Theta^{k}_{\tau_{k+1}} = 0, \quad \forall k \geq 0,
	$$
	where by convention $\Theta^{u}_0 = \Theta^{0}_0 = 0$.
\end{itemize}
Obviously, the problem \eqref{v-Swing} is equivalent to the impulse control problem \eqref{v-Swing-impulse-control}:
\begin{equation*}
\forall s \in \R, \quad \bar v^{(n_{\max})}(s) = v(s).
\end{equation*}
%

\noindent\textbf{\textsl{Penalized BSDE Associated to Swing Option Pricing Problem}} \\

\noindent Let us introduce the uncontrolled variable $(Q, \Theta)$ defined by
\begin{equation}
\begin{cases}
Q_t & = \ \ N_t, \hspace*{2cm} \forall t \geq 0, \\
\Theta_t & = \ \ t - \int^{t}_{0} \Theta_{s^{-}} dN_s,
\end{cases}
\label{fwd-Swing}
\end{equation}
where we recall that $N$ is a Poisson process with intensity $\lambda > 0$.
For a penalization coefficient $p > 0$, the penalized BSDE with jumps associated to problem \eqref{v-Swing-impulse-control} is
\begin{align}
Y^{p}_t & = \kappa(S_T, Q_{T^{-}}, \Theta_{T^{-}}) - \int^{T}_{t} r Y^{p}_u du + p \int^{T}_{t} (V^{p}_u + \kappa(S_{u}, Q_{u^{-}}, \Theta_{u^{-}}))^{+} \lambda du \nonumber \\
				& \hspace*{2.8cm} - \int^{T}_{t} Z^{p}_u dW_u - \int^{T}_{t} V^{p}_u dN_u, \quad \forall 0 \leq t \leq T.
\label{PBSDE-Swing}
\end{align}
in which the intervention gain is defined by
\begin{equation}
	\kappa (s, q, \theta) := \phi (s) \Ind_{ \left\{ \left( \theta \geq \delta \right) \cap \left( q \leq n_{\max} - 1 \right) \right\} }, \quad \forall (s, q, \theta) \in \R \times \N \times \R^{+}.
\label{kappa-Swing}
\end{equation}

\subsection{Numerical Valuation Algorithm} 
\label{subsec:algo}

The three-dimensional process $(S, Q, \Theta)$ can be exactly computed on the time grid $\pi$.
We shall denote by $(S^{\pi}, Q^\pi, \Theta^\pi)$ its version on $\pi$.
In particular, the pure jump processes $Q$ and $\Theta$ can be computed without approximation error, recall \eqref{fwd-Swing},
from a simulated trajectory of a Poisson process with intensity $\lambda$, see for example \cite{CT03}. 

As the driver of the penalized BSDE with jumps \eqref{PBSDE-Swing} does not depend on $Z^p$, it is sufficient to compute $\left( Y^{p, \pi}, V^{p, \pi} \right)$ on $\pi$, which can be made by the backward recursive scheme:
\begin{equation}
	\begin{cases}
		Y^{p, \pi}_{t_N} = \kappa (S^{\pi}_{t_N}, Q^{\pi}_{t_N}, \Theta^{\pi}_{t_{N}}) \\
		\forall t_n \in \pi, t_n < T : \\
		\hspace*{0.8cm} V^{p, \pi}_{t_n}
		= \frac{1}{\lambda \Delta t_{n+1}} \Esp \left[ Y^{p, \pi}_{t_{n+1}} \Delta \tilde N_{t_{n+1}} | \Fcurs_{t_n} \right]  \\
		\hspace*{0.8cm} Y^{p, \pi}_{t_n}
		 = \frac{1}{1 + r \Delta t_{n+1}} \left( \Esp \left[ Y^{p, \pi}_{t_{n+1}} | \Fcurs_{t_n} \right] \right. \\
		 \hspace*{3cm} \left. + \left[ p \left( V^{p, \pi}_{t_n} + \kappa(S^{\pi}_{t_n}, Q^{\pi}_{t_n}, \Theta^{\pi}_{t_n}) \right)^{+} - V^{p, \pi}_{t_n} \right] \lambda \Delta t_{n+1} \right)
	\end{cases}
	\label{bwd-scheme-Swing}
\end{equation}
and in our setting:
\begin{align}
\Esp \left[ \cdot | \Fcurs_{t_n} \right] = \Esp \left[ \cdot | (S^{\pi}_{t_n}, Q^{\pi}_{t_n}, \Theta^{\pi}_{t_{n}}) \right] \ .
\label{esp-cond}
\end{align}

\noindent\textbf{\textsl{Monte Carlo-based Resolution}} \\

\noindent We compute estimators of the conditional expectations \eqref{esp-cond} by a classical least squares Monte Carlo technique.
We use least squares regressions on adaptative local basis functions, see \cite{BW10}.
Since such an approach is only relevant for real-valued variables (the regression basis functions have compact support),
it cannot handle the integer-valued variable $Q^{\pi}$. \\

We thus deal with this variable explicitly.
Namely, we simulate $M \geq 1$ i.i.d. paths of $(S^{\pi}, Q^{\pi}, \Theta^{\pi})$
$$
(S^{\pi, (m)}, Q^{\pi, (m)}, \Theta^{\pi, (m)}), \quad \forall m \leq M,
$$
and at each time step $t_n < T$ of the backward recursion, the set of $M$ Monte Carlo samples is separated
in $n_{\max} + 1$ sub-sets corresponding to the samples on which $Q^{\pi}_{t_n} = 0, 1, \ldots, n_{\max - 1}$ and $Q^{\pi}_{t_n} \geq n_{\max}$.
Then, we just need to estimate $n_{\max}$ conditional expectation operators, namely
$$
\Esp \left[ \cdot | (S^{\pi}_{t_n}, Q^{\pi}_{t_n} = q, \Theta^{\pi}_{t_{n}}) \right], \quad \forall q \in \left\{ 0, 1, \ldots, n_{\max} - 1 \right\}
$$
by using the corresponding Monte Carlo samples, since $(Y^{p, \pi}_{t_n}, V^{p, \pi}_{t_n}) = (0, 0)$ when $Q^{\pi}_{t_n} \geq n_{\max}$, see Remark \ref{simplif-esp-cond}.

\begin{remark} Let us show that $Q^{\pi}_{t_n} \geq n_{\max}$ implies $Y^{p, \pi}_{t_n} = V^{p, \pi}_{t_n} = 0$. We have
\begin{eqnarray*}
& & \Esp \left[ Y^{p, \pi}_{t_{n+1}} \Delta \tilde N_{t_{n+1}} | (S^{\pi}_{t_n}, Q^{\pi}_{t_n} \geq n_{\max}, \Theta^{\pi}_{t_{n}}) \right] \\
& = & \Esp \Big[  \underbrace{\Esp \left[ Y^{p, \pi}_{t_{n+1}} |  Q^{\pi}_{t_n} \geq n_{\max} \right]}_{ = \ 0 } \Delta \tilde N_{t_{n+1}} |  (S^{\pi}_{t_n}, Q^{\pi}_{t_n} \geq n_{\max}, \Theta^{\pi}_{t_{n}})  \Big] \ = \ 0,
\end{eqnarray*}
by definition of $\kappa$ in \eqref{kappa-Swing} and \eqref{bwd-scheme-Swing}. In the same way
\begin{align*}
\Esp \left[ Y^{p, \pi}_{t_{n+1}} |  (S^{\pi}_{t_n}, Q^{\pi}_{t_n} \geq n_{\max}, \Theta^{\pi}_{t_{n}}) \right] 
\ = \ 0
\end{align*}
which allows to conlude.
\label{simplif-esp-cond}
\end{remark}


The Monte Carlo-based algorithm 
is then the following:
\begin{enumerate}[I.]
	\item Initialization:
	$$
	Y^{p, \pi, (m)}_{t_N} = \phi(S^{\pi, (m)}_{t_N}) \Ind_{ \left\{ \left( \Theta^{\pi, (m)}_{t_{N}} \geq \delta \right) \cap \left( Q^{\pi, (m)}_{t_{N}} \leq n_{\max} - 1 \right) \right\} }, \quad \forall m \leq M.
	$$
	\item Computation backward in time of $(V^{p, \pi, (m)}, Y^{p, \pi, (m)})$ on each sample $m\leq M$. \\
		For $n = N-1, \ldots, 0$, set
		\begin{align*}
			\Mcurs^q_{t_n} & := \left\{ m = 1, \ldots, M: Q^{\pi, (m)}_{t_n} = q \right\}, \quad \forall q \leq n_{\max} - 1, \\
			\Mcurs^{n_{\max}}_{t_n} & := \left\{ m = 1, \ldots, M: Q^{\pi, (m)}_{t_n} \geq n_{\max} \right\}.
		\end{align*}
		Then:
	\begin{enumerate}[1.]
		\item For any $m \in \Mcurs^{n_{\max}}_{t_n}$,
		$$
		V^{p, \pi, (m)}_{t_n} = Y^{p, \pi, (m)}_{t_n} = 0.
		$$
		\item Set $q := n_{\max} - 1$. \\
		\item If $q \geq 0$, for any $m \in \Mcurs^{q}_{t_n}$, the conditional expectations estimators
		\begin{align*}
		\varepsilon^{V, q, (m)}_{t_n} & \approx \Esp \left[ Y^{p, \pi}_{t_{n+1}} \Delta \tilde N_{t_{n+1}} | \ \left( S^{\pi, (m)}_{t_n}, \Theta^{\pi, (m)}_{t_{n}} \right) \right] \\
		\varepsilon^{Y, q, (m)}_{t_n} & \approx \Esp \left[ Y^{p, \pi}_{t_{n+1}} | \ \left( S^{\pi, (m)}_{t_n}, \Theta^{\pi, (m)}_{t_{n}} \right) \right]
		\end{align*}
		are approximated by least squares regression of $\left( Y^{p, \pi, (m)}_{t_{n+1}} \Delta \tilde N^{(m)}_{t_{n+1}} \right)_{m \in \Mcurs^{q}_{t_n}}$ and $\left( Y^{p, \pi, (m)}_{t_{n+1}} \right)_{m \in \Mcurs^{q}_{t_n}}$ respectively on
		$$
		\left( \psi_1( S^{\pi, (m)}_{t_n}, \Theta^{\pi, (m)}_{t_{n}} ), \ldots, \psi_b( S^{\pi, (m)}_{t_n}, \Theta^{\pi, (m)}_{t_{n}} ) \right)_{m \in \Mcurs^{q}_{t_n}}
		$$
		with $b := b^{q}_{t_n}$ basis functions $\{ \psi_1, \ldots, \psi_b \}$. Then,
		\begin{equation*}
	\begin{cases}
		V^{p, \pi, (m)}_{t_n} = \frac{1}{\lambda \Delta t_{n+1}} \ \varepsilon^{V, q, (m)}_{t_n}   \\
		Y^{p, \pi, (m)}_{t_n}
		 = \frac{1}{1 + r \Delta t_{n+1}} \left( \varepsilon^{Y, q, (m)}_{t_n} + \left[ p \left( V^{p, \pi, (m)}_{t_n} + \phi(S^{\pi, (m)}_{t_n}) \Ind_{ \left\{ \Theta^{\pi, (m)}_{t_{n}} \geq \delta \right\} } \right)^{+} \right. \right. \\
		 \hspace*{2.8cm} \left. \left. - \ V^{p, \pi, (m)}_{t_n} \right] \lambda \Delta t_{n+1} \right).
	\end{cases}
		\end{equation*}
		\item $q := q-1$ and go to 3.
	\end{enumerate}
	\item At time $t_0$ ($\Mcurs^0_{t_0} = \left\{ 1, \ldots, M \right\}$ and $\Mcurs^{q}_{t_0} = \emptyset, \forall q \geq 1$)
	the Swing option price estimator is given by $Y^{p, \pi}_{t_0}$ such that
	\begin{equation*}
	\begin{cases}
		V^{p, \pi}_{t_0} = \frac{1}{\lambda \Delta t_{1}} \frac{1}{M} \sum_{m=1}^M \left( Y^{p, \pi, (m)}_{t_{1}} \Delta \tilde N^{(m)}_{t_{1}} \right) \\
		Y^{p, \pi}_{t_0}
		 = \frac{1}{1 + r \Delta t_{1}} \left( \frac{1}{M} \sum_{m=1}^M Y^{p, \pi, (m)}_{t_{1}} + \left[ p ( V^{p, \pi}_{t_0} )^{+} - V^{p, \pi}_{t_0} \right] \lambda \Delta t_{1} \right).
	\end{cases}
		\end{equation*}
	
\end{enumerate}

Let us highlight some features of the above-presented Monte Carlo procedure. 
At each backward induction date $t_n < T$, we have to estimate in worst cases $2 \times n_{\max}$ conditional expectations, performed on each subset $\Mcurs^q_{t_n}, q \leq n_{\max} - 1$. When $n_{\max}$ increases, much more Monte Carlo samples are needed as each least squares regression requires a sufficient number of samples.

In addition, the number of local basis functions $b^{q}_{t_n}$ has to be adapted to the number of Monte Carlo samples used for the least squares regression, namely $\card(\Mcurs^q_{t_n})$.
We thus introduce a dynamic choice for $b^{q}_{t_n}$: it is fixed proportionally to $\card(\Mcurs^q_{t_n})$ for any $q \leq n_{\max}-1$ and $t_n<T$.

\begin{remark}[Statistical error of our method] 
A control of the statistical error introduced by the least squares Monte Carlo approach
is provided in Gobet et al. \cite{GL06} (see \cite{Lem05} for further details).
By extension, this applies to BSDE with jumps (see \cite{Eli08}) to control the error
on the jump component $V^{p, \pi}$ and thus ensures that the least squares Monte Carlo error tends to $0$
as the number of samples $M$ and the number of basis functions $b$ tends to $+ \infty$.

\end{remark}

\noindent\textbf{A Benchmark Method Based on Iteration} \\

\noindent The classical method to value such a Swing option, recall formulation \eqref{v-Swing},
is based on an iteration over the number of exercise rights, see for example \cite{CT08}.
The dynamic programming principle provides a direct link between the solution $v^{(j)}$
to the same problem as \eqref{v-Swing} but with at most $j \leq n_{\max}$ exercise rights
and the solution $v^{(j-1)}$ with at most $(j-1)$ exercise rights.

The value of the Swing option with $0$ exercise right is obviously zero $v^{(0)} = 0$.
Then, we compute the sequence of values of Swing options with $j$ exercise rights $v^{(j)}$, $j = 1, \ldots, n_{\max}$ according the backward recursion scheme:
\begin{equation*}
	\begin{cases}
		v^{(j)}(t_N, s) = \phi(s) \\
		\forall t_n \in \pi, T - \delta < t_n < T : \\
		\hspace*{0.8cm} v^{(j)}(t_n, s) = \max \left\{ \phi(s) \ ; \ e^{-r \Delta t_{n+1}} \Esp^{(t_n, s)} \left[ v^{(j)}(t_{n+1}, S^{\pi}_{t_{n+1}}) \right] \right\} \\
		\forall t_n \in \pi, t_n \leq T - \delta : \\
		\hspace*{0.8cm} v^{(j)}(t_n, s) = \max \left\{ \phi(s) + e^{-r \delta} \Esp^{(t_n, s)} \left[ v^{(j - 1)} (t_n + \delta, S^{\pi}_{t_n + \delta}) \right] ; \right. \\
		\hspace*{3.3cm} \left. e^{-r \Delta t_{n+1}} \Esp^{(t_n, s)} \left[ v^{(j)}(t_{n+1}, S^{\pi}_{t_{n+1}}) \right] \right\}
	\end{cases}
\end{equation*}
where $\Esp^{(t_n, s)} \left[ \cdot \right] := \Esp \left[ \cdot | S^{\pi}_{t_n} = s \right]$.
We use the same least squares Monte Carlo regression-based method for approximating the conditional expectations operators as above.

\subsection{Pricing Results}

We consider put options with maturity $T = 1$ year and a strike price $K = 100$.
The Black and Scholes parameters, see \eqref{prix-BS}, are $r = 0.05$, $\sigma = 0.3$ and $s = 100$.

\subsubsection{Special Case of American Options: $n_{\max} = 1$}

In the single-exercise case, the additional variable $\Theta$ disappears (there is no delay constraint).
This helps simplify the Monte Carlo procedure described in Paragraph \ref{subsec:algo}.
In particular, the algorithm implies only two sequences of samples subsets, whether one jump of $N$ occurs before $T$ on the considered path or not: that is
$((\Mcurs^1)_{t_n})_{n \leq N-1}$ and $((\Mcurs^0)_{t_n})_{n \leq N-1}$. \\

In our numerical experiments, we find out that 
increasing too much $\lambda$ makes the variance of the Monte Carlo procedure explode.
It would be necessary to increase the number of Monte Carlo samples, which leads to prohibitive computational times
(each pricing result presented below was obtained after a computation between 6 and 8 hours).
For the same reason (exploding behavior of the penalized BSDE driver), we restrict our numerical experiments to penalization parameters $\leq 5$. \\

The benchmark price for the American put option is $9.88$ (by a binomial approach or classical Monte Carlo). We report
in Table \ref{tab:put-US} the price given by our method when varying $\lambda$ and
the number of time steps $N$ for a penalization parameter equal to $5$. We used 20 million of Monte Carlo paths.

\begin{table}  
\begin{center}
\begin{tabular}{p{1.5cm}p{1.5cm}p{1.5cm}p{1.5cm}p{1.5cm}p{1.5cm}}
\hline\noalign{\smallskip}
\backslashbox{$\lambda$}{$N$} & 20 & 40 & 80 & 160 & 320 \\
\hline
\hline
	3 & 9.89 & 9.92	& 9.95	& 9.94	& 9.83 \\
	4 & 9.92 & 9.96	& 9.99 &	9.97	& 9.83 \\
	5 & 9.95 & 9.99	& 10.02	& 9.98	& 9.76 \\
\noalign{\smallskip}\hline\noalign{\smallskip}
\end{tabular}
\end{center}
\caption{Approximate prices of an American option with $p=5$}
\label{tab:put-US}    
\end{table}

In all the experiments that we performed in this simple case,
we numerically observed that the limiting prices of our method
(with respect to $N$) are below the benchmark value: this is due to penalization.

\subsubsection{Swing Options with $n_{\max} = 2$}

We consider time delays $\delta = \frac{1}{10}, \frac{2}{10}, \frac{3}{10}$. \\

The benchmark prices for the Swing put option with $2$ exercise rights are $19.27$, $18.77$ and $18.21$ respectively
(computed with the method described in Paragraph \ref{subsec:algo}, $N = 200$ time steps and $M = 5$ million of Monte Carlo paths).
We report in Figures 1, 2 and 3 the corresponding approximate prices
when varying $\lambda$ and $N$ for a penalization parameter equal to $5$ and $10$ (we used $40$ million of Monte Carlo paths
and $N = 20, 40, 80, 160, 320, 640$).

\begin{figure}[h]
	\centering
	\begin{minipage}{0.5\linewidth}
      \centering \includegraphics[width=7cm]{./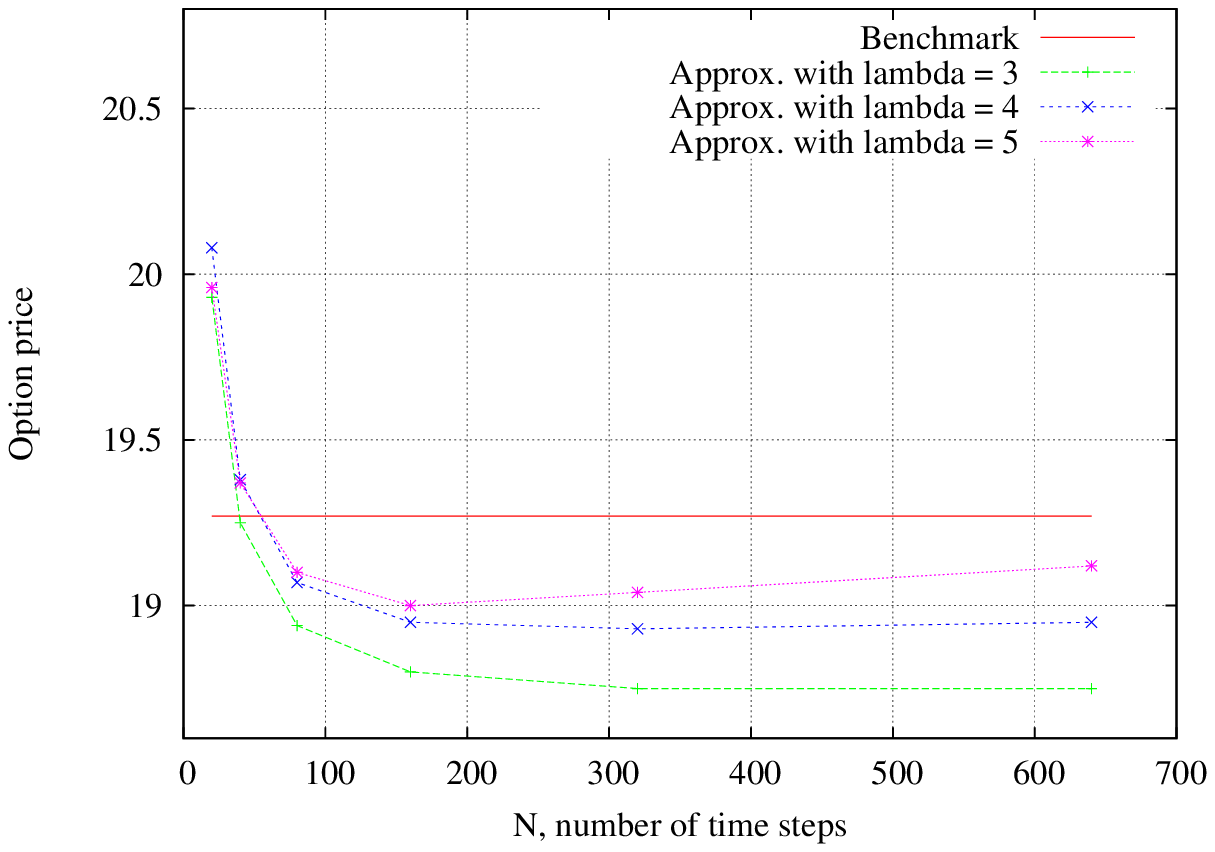}
   \end{minipage}\hfill
   \begin{minipage}{0.5\linewidth}
      \centering \includegraphics[width=7cm]{./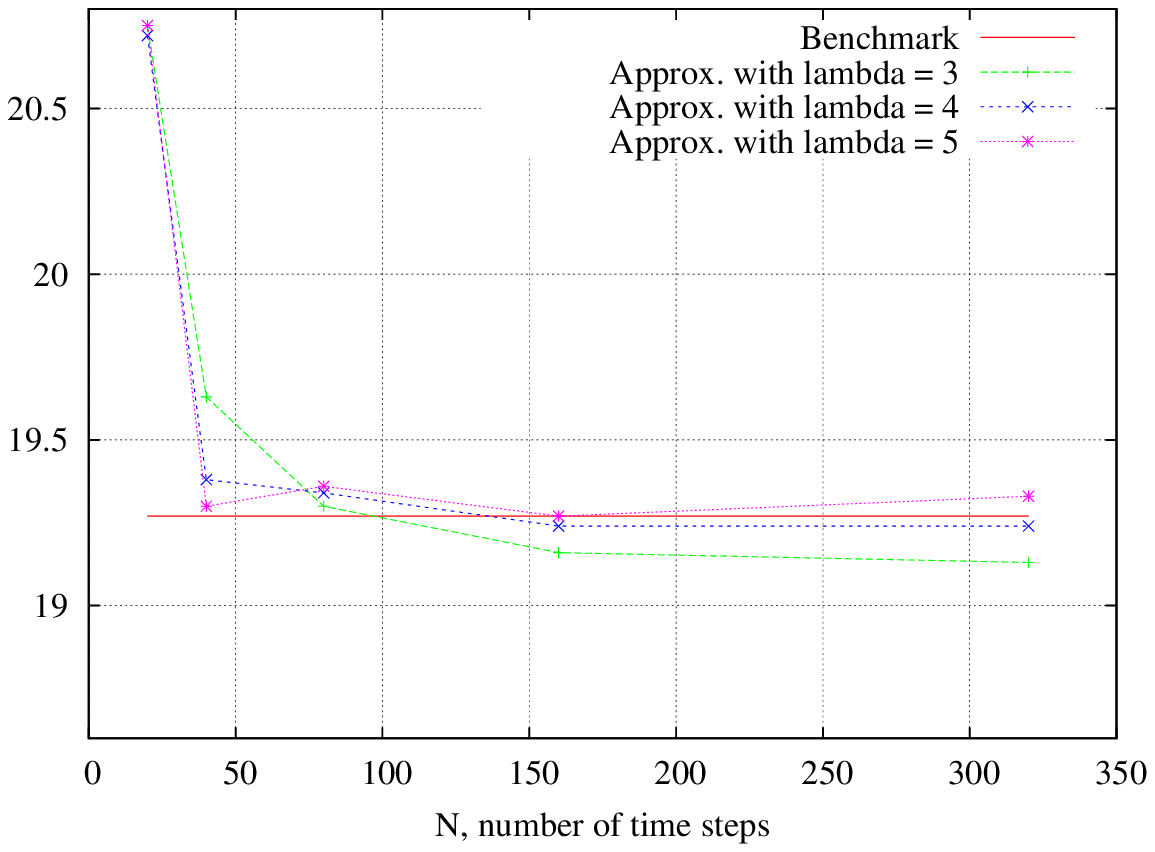}
   \end{minipage} \hfill
	\centering{\caption{Approximate prices of a Swing option with $2$ exercise rights}and $\delta = \frac{1}{10}$, with $p=5$ (left) and $p=10$ (right)}
	\label{fig:put-swing-1}
\end{figure}

\begin{figure}[h]
	\centering
	\begin{minipage}{0.5\linewidth}
      \centering \includegraphics[width=7cm]{./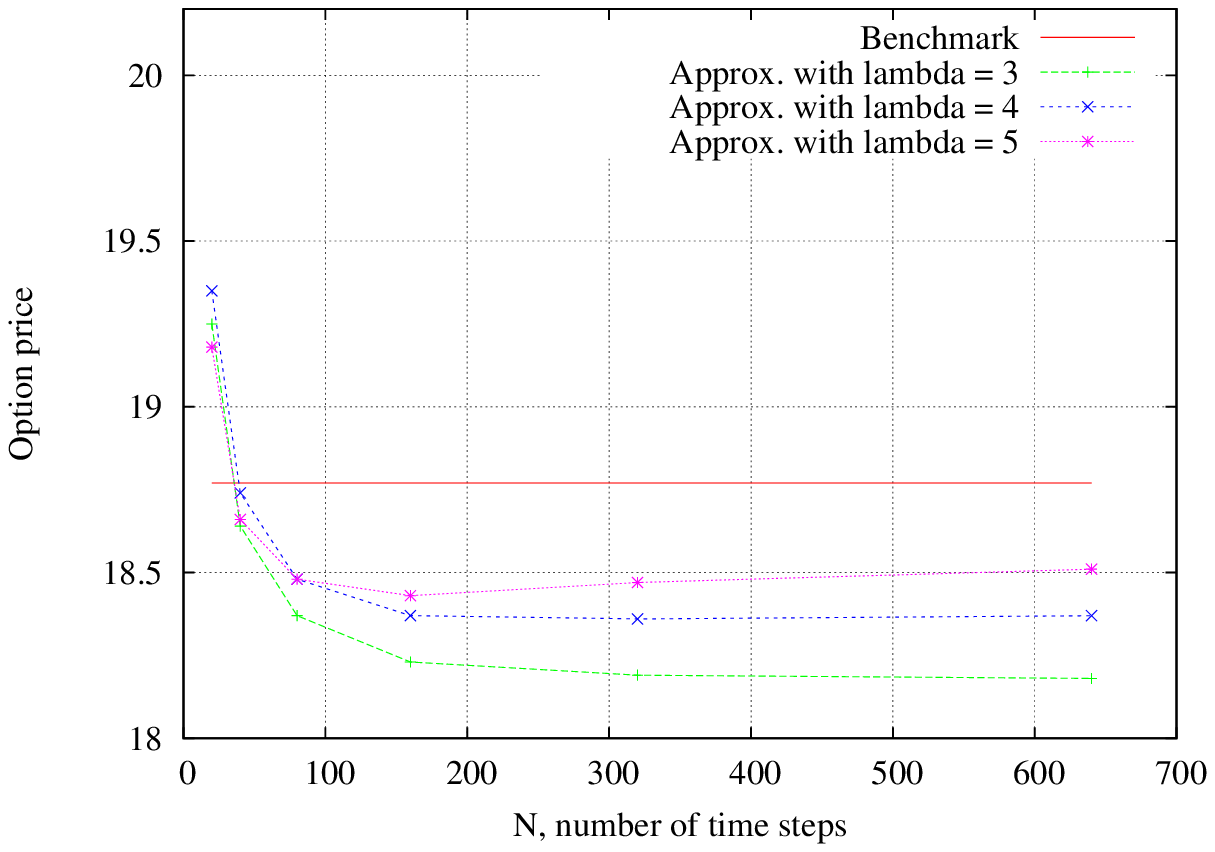}
   \end{minipage}\hfill
   \begin{minipage}{0.5\linewidth}
      \centering \includegraphics[width=7cm]{./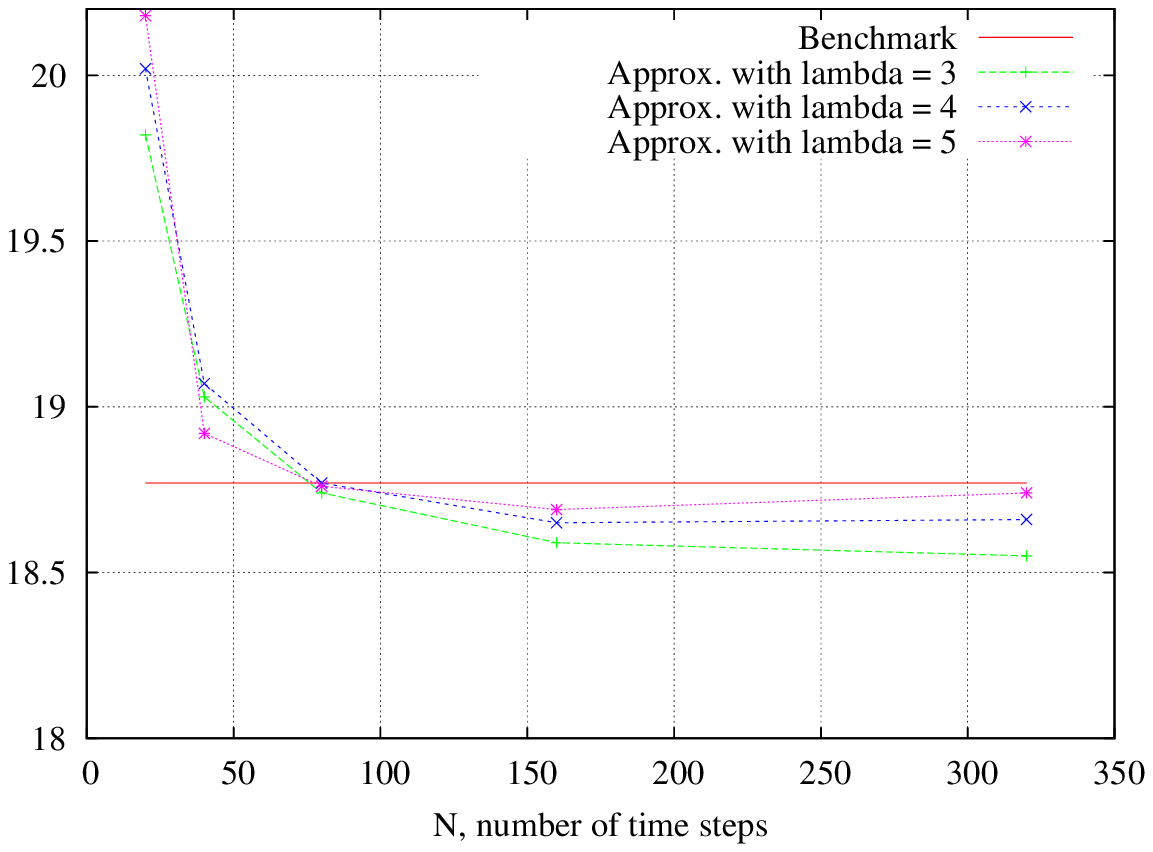}
   \end{minipage} \hfill
	\centering{\caption{Approximate prices of a Swing option with $2$ exercise rights}and $\delta = \frac{2}{10}$, with $p=5$ (left) and $p=10$ (right)}
	\label{fig:put-swing-2}
\end{figure}

\begin{figure}[h]
	\centering
	\begin{minipage}{0.5\linewidth}
      \centering \includegraphics[width=7cm]{./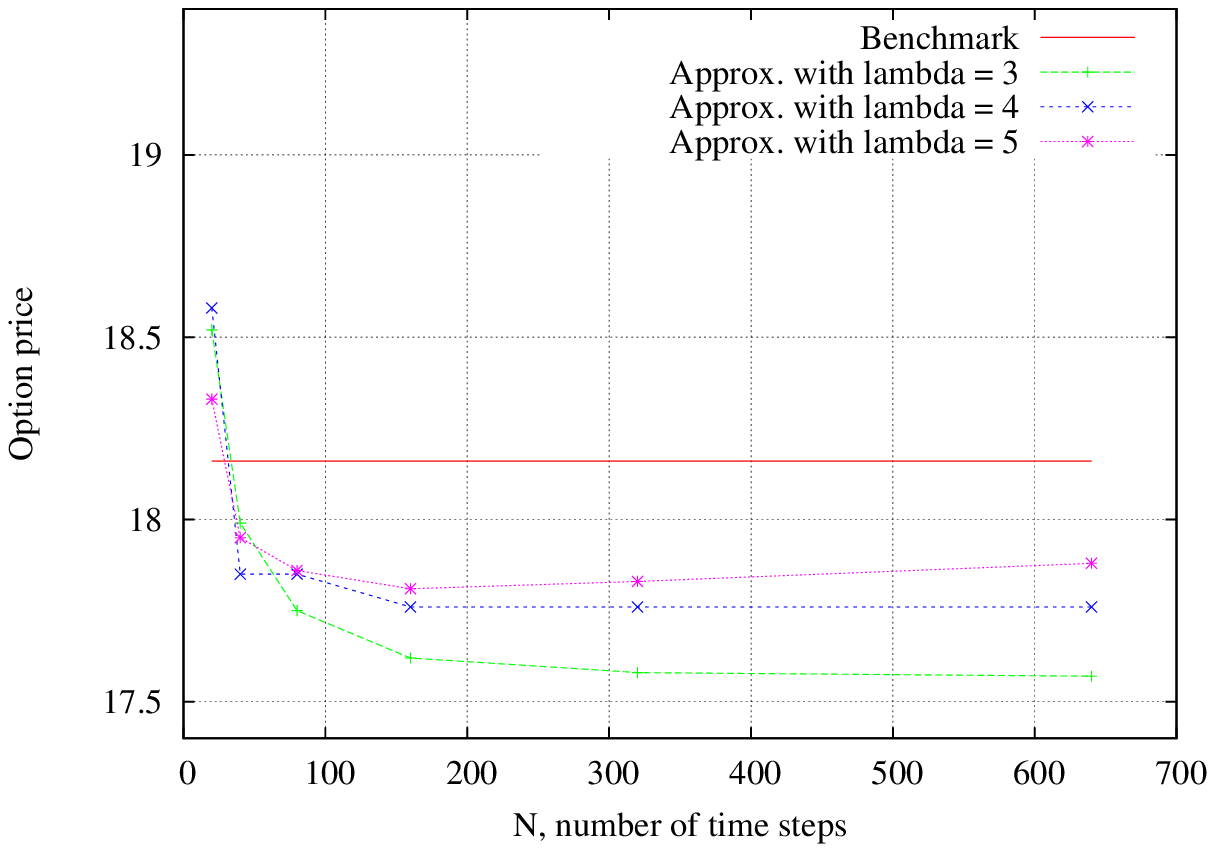}
   \end{minipage}\hfill
   \begin{minipage}{0.5\linewidth}
      \centering \includegraphics[width=7cm]{./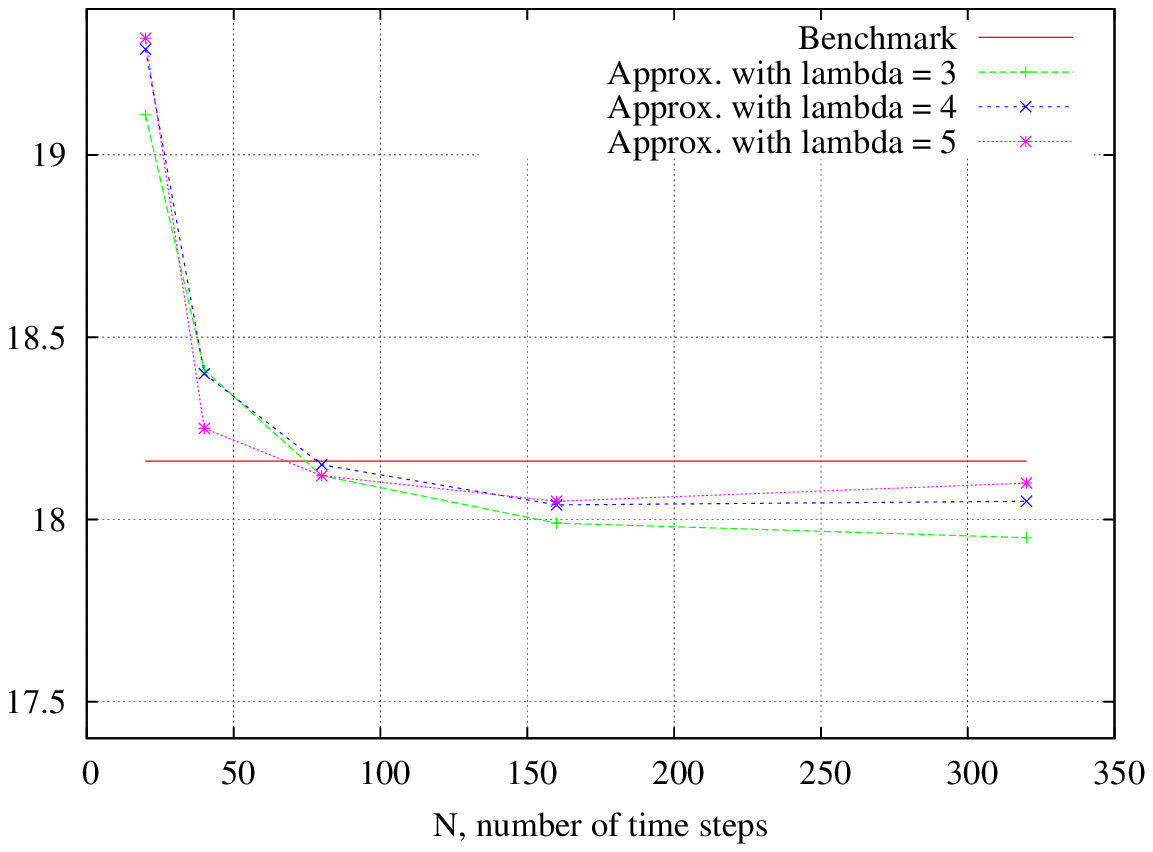}
   \end{minipage} \hfill
	\centering{\caption{Approximate prices of a Swing option with $2$ exercise rights}and $\delta = \frac{3}{10}$, with $p=5$ (left) and $p=10$ (right)}
	\label{fig:put-swing-3}
\end{figure}

For each considered value of $\lambda$, we retrieve a convergence in the number of time steps $N$
of our method. As $p=5$, approximate prices are converged from $N = 160$, so that we restrict ourselves to $N \leq 320$ time steps as $p=10$. 
The limiting values are still below the benchmark 
but accurate option prices (relative error less than $1\%$) are obtained with a penalization coefficient $p$ equal to $10$ and $N=160$.
See also Table \ref{tab:put-swing-CV} in which the (signed) relative error to the benchmark is given in brackets. 
Besides, we observe a monotone convergence in $\lambda$ of our approximate method. 

\begin{table}
\begin{center}
\begin{tabular}{p{1.5cm}p{1.5cm}p{2.5cm}p{2.5cm}p{2.5cm}}
\hline\noalign{\smallskip}
				& \backslashbox{$p$}{$\lambda$} & $3$ & $4$ & $5$ \\
\hline
\hline
	\multirow{2}{*}{$\delta = \frac{1}{10}$} & $5$ & 18.80	(-2.44\%) &	18.95	(-1.66\%) &	19.00	(-1.40\%) \\
								 & $10$ & 19.16	(-0.57\%) &	19.24	(-0.16\%) &	19.27	(0.00\%) \\
\noalign{\smallskip}\hline\noalign{\smallskip}
	\multirow{2}{*}{$\delta = \frac{2}{10}$} & $5$ & 18.23	(-2.88\%) &	18.37	(-2.13\%) &	18.43	(-1.81\%) \\
								 & $10$ & 18.59	(-0.96\%) &	18.65	(-0.64\%) &	18.69	(-0.43\%) \\	
\noalign{\smallskip}\hline\noalign{\smallskip}
	\multirow{2}{*}{$\delta = \frac{3}{10}$} & $5$ & 17.62	(-2.97\%) &	17.76	(-2.20\%) &	17.81	(-1.93\%) \\
								 & $10$ & 17.99	(-0.94\%) &	18.04	(-0.66\%) &	18.05	(-0.61\%) \\
\noalign{\smallskip}\hline\noalign{\smallskip}
\end{tabular}
\end{center}
\caption{Prices of a Swing option with $2$ exercise rights}\centering{(limiting values with $N=160$)}
\label{tab:put-swing-CV}      
\end{table}

We should point out that fine-tuning the parameters of the algorithm is difficult. As already mentioned, 
since the number of Monte Carlo paths is different in each set of sample paths $\Mcurs^q_{t_n}, q = 0, 1, 2$, the number of basis functions used for the least squares regressions has to be dynamically adapted. And when increasing much more the jump intensity $\lambda$, more Monte Carlo samples would be necessary. 

For such a Swing option, the running time is much longer because the conditional expectations are computed by regression with respect to the bidimensional state variable $(S^{\pi}, \Theta^{\pi})$. The computation of one option price takes at least 15 hours in above cases (when $N \geq 80$). In comparison, the benchmark method takes less than 5 minutes. Besides, the complexity of our method increases with $n_{\max}$, leading to untractable computational times for bigger values of $n_{\max}$, see Remark \ref{rem-increase-n-max}. \\

On this particular case of Swing options valuation, it seems that our method is less competitive 
than the classical approach. 
This is without any doubt due to the strong degeneracy of such a problem in our impulse control context:
the valuation problem is $3$-dimensional and involves an additional integer-valued state variable $Q$
representing the number of exercise rights used at any considered time.

However, our method works and the numerical results that we obtain are consistent
with the theoretical convergence rate given in Theorem \ref{theo:cv-rate-global}.
One can expect that our method would work better on less degenerate problems.

\begin{remark}[Dealing with more exercise rights]
\label{rem-increase-n-max}
The computational time of our method intuitively increases \textit{linearly} with the number of exercise rights $n_{\max}$.
Indeed, at each time step of the backward induction procedure, the number of conditional expectation estimations is proportional to $n_{\max}$.
Besides, when multiplying by $2$ the number of exercise rights, it would require, at least, a double number of Monte Carlo samples for a same accuracy of the
computation of conditional expectation estimators.

Let us mention that the computational time of the benchmark method using iteration 
increases linearly as function of the the maximal number of exercise rights as well.
\end{remark}

%

%
%
%
%

\end{document}